\def\podc{0} 
\def\cready{0} 
\def\full{1} 
\def\jformat{0} 
\newtheorem{theorem}{Theorem}
\newtheorem{lemma}{Lemma}
\newtheorem{claim}[lemma]{Claim}
\newtheorem{definition}[lemma]{Definition}
\newtheorem{corollary}[lemma]{Corollary}
\newtheorem{remark}{Remark}
\newtheorem{ourclaim}[lemma]{Claim}
\newcommand{\BPFOF}{\smallskip \begin{proofof}} \newcommand {\EPFOF}{\end{proofof}}
\def\congest{{\sf CONGEST}}
\def\ver{{\sf v}}
\def\calP{{\cal P}}
\def\calG{{\cal G}}
\def\calF{{\cal F}}
\def\calT{{\cal T}}
\def\calu{{\sf u}}
\def\calv{{\sf v}}
\def\caly{{\sf y}}
\def\Ex{{\rm Ex}}
\def\congest{{\sf CONGEST}}
\def\ver{{\sf v}}
\def\calP{{\mathcal P}}
\def\calG{{\mathcal G}}
\def\calF{{\mathcal F}}
\def\calT{{\mathcal T}}
\def\calu{{\mathsf u}}
\def\calv{{\mathsf v}}
\def\caly{{\mathsf y}}
\def\Ex{{\mathrm Ex}}
\newcommand{\qed}{\;\;\;\FullBox}
\newenvironment{proof}{\noindent{\bf Proof:~~}}{\(\qed\)}
\def\eps{\epsilon}
\def\poly{{\rm poly}}
\def\FullBox{\hbox{\vrule width 8pt height 8pt depth 0pt}}
\newenvironment{proofof}[1]{\smallskip\noindent{\bf Proof of #1:}}%
        {\hspace*{\fill}$\Box$\par}
\newcommand{\mnote}[1]{{\color{red}($\star$)}\marginpar{\tiny\bf
		\begin{minipage}[t]{0.5in}
			\raggedright#1
		\end{minipage}}}
\definecolor{darkgreen}{rgb}{0.0, 0.5, 0.0}
\newcommand{\Dana}[1]{{\color{darkgreen}\bf Dana: #1}}
\journalname{Distributed Computing}
\begin{document}

\ifnum\podc=1 
\begin{titlepage}
\title{Property Testing of Planarity in the \congest\ model \\ \medskip {\large [Regular paper]}}
	\author{
		Reut Levi\thanks{Weizmann Institute of Science, {\tt reut.levi@weizmann.ac.il}.}
				\and  Moti Medina\thanks{Department of Electrical \& Computer Engineering, Ben-Gurion University of the Negev, {\tt medinamo@bgu.ac.il}.}
				\and  Dana Ron
		\thanks{School of Electrical Engineering, Tel Aviv University, {\tt danaron@tau.ac.il}.
			This research was partially supported
			by the Israel Science Foundation grant No. 671/13.}
	}
\date{}
\else%
\ifnum\cready=1
\title{Property Testing of Planarity in the \congest\ model}
\subtitle{Regular paper}
\author{Reut Levi}
\affiliation{%
  \institution{
    Weizmann Institute of Science}
}
\email{reut.levi@weizmann.ac.il}
\author{Moti Medina}
\affiliation{%
  \institution{
    Ben Gurion University of the Negev}
}
\email{medinamo@bgu.ac.il}
\author{Dana Ron}
\authornote{This research was partially supported 			by the Israel Science Foundation grants No. 671/13 and 1146/18.}
\affiliation{%
  \institution{
    Tel Aviv University}
}
\email{danaron@tau.ac.il}
\else 
\ifnum\jformat=0
\title{\bf{Property Testing of Planarity in the \congest\ model}\thanks{This
article extends work presented at PODC 2018~\cite{LMR18}.}}
	\author{
		Reut Levi\thanks{Weizmann Institute of Science, {\tt reut.levi@weizmann.ac.il}. This project has received
funding from the European Research Council (ERC) under the European Unions Horizon 2020 research and innovation
programme (grant agreement No. 819702). It was also supported by ERC-CoG grant 772839.}
				\and  Moti Medina\thanks{Department of Electrical \& Computer Engineering, Ben-Gurion University of the Negev, {\tt medinamo@bgu.ac.il}.}
				\and  Dana Ron
		\thanks{School of Electrical Engineering, Tel Aviv University, {\tt danaron@tau.ac.il}.
			This research was partially supported
			by the Israel Science Foundation grants No. 671/13 and 1146/18.}}
\date{}
\else 
\title{\bf{Property Testing of Planarity in the \congest\ model}\thanks{This
article extends work presented at PODC 2018~\cite{LMR18}. Reut Levi is partially supported by the European Research Council (ERC) under the European Unions Horizon 2020 research and innovation programme (grant agreement No. 819702). Reut Levi was also supported by ERC-CoG grant 772839. Dana Ron was partially supported by the Israel Science Foundation grants No. 671/13 and 1146/18.}}
\author{Reut Levi         \and
        Moti Medina       \and
        Dana Ron
}

\authorrunning{R.~Levi, M.~Medina and D.~Ron}

\institute{R. Levi \at
              Weizmann Institute of Science \\
              \email{reut.levi@weizmann.ac.il}           
           \and
           M. Medina \at
              School of Electrical \& Computer Engineering, Ben-Gurion University of the Negev \\
              \email{medinamo@bgu.ac.il} \\
              Orcid ID: \url{https://orcid.org/0000-0002-5572-3754
}
           \and
           D. Ron \at
                School of Electrical Engineering, Tel Aviv University \\
                \email{danaron@tau.ac.il}
}
\fi 
\fi 
\fi 

\ifnum\cready=0
\maketitle
\fi

\ifnum\jformat=1
\date{Received: date / Accepted: date}
\maketitle
\fi

\begin{abstract}


We give a distributed  algorithm in the {\sf CONGEST} model for property testing  of planarity with one-sided error in general (unbounded-degree) graphs.
Following Censor-Hillel et al. (DISC 2016), who recently initiated the study of property testing in the distributed setting, our algorithm gives the following guarantee:
For a graph $G = (V,E)$ and a distance parameter $\eps$, if $G$ is planar, then every node outputs {\sf accept\/}, and if $G$ is $\eps$-far from being planar (i.e., more than $\eps\cdot |E|$ edges need to be removed in order to make $G$ planar), then with 
probability $1-1/\poly(n)$
at least one node outputs {\sf reject}.
The algorithm runs in $O(\log|V|\cdot\poly(1/\eps))$ rounds, and we show that this result is tight in terms of the dependence on $|V|$.

Our algorithm combines several techniques of graph partitioning and local verification of planar embeddings.
Furthermore, we show how a main subroutine in our algorithm can be applied to derive additional results for property testing  of cycle-freeness and bipartiteness, as well as the construction of spanners, in minor-free
(unweighted) graphs.
\ifnum\jformat=1
\keywords{Distributed Graph algorithms \and Congest \and Distributed property testing \and Planarity testing}
\fi
\end{abstract}

\ifnum\podc=1
\thispagestyle{empty}
\end{titlepage}
	
\setcounter{page}{1}
	\newpage	
\else
\ifnum\cready=1 
\begin{CCSXML}
<ccs2012>
<concept>
<concept_id>10003752.10003809.10010172</concept_id>
<concept_desc>Theory of computation~Distributed algorithms</concept_desc>
<concept_significance>500</concept_significance>
</concept>
<concept>
<concept_id>10003752.10003753.10003761.10003763</concept_id>
<concept_desc>Theory of computation~Distributed computing models</concept_desc>
<concept_significance>300</concept_significance>
</concept>
<concept>
<concept_id>10003752.10003809.10003635</concept_id>
<concept_desc>Theory of computation~Graph algorithms analysis</concept_desc>
<concept_significance>300</concept_significance>
</concept>
</ccs2012>
\end{CCSXML}

\ccsdesc[500]{Theory of computation~Distributed algorithms}
\ccsdesc[300]{Theory of computation~Distributed computing models}
\ccsdesc[300]{Theory of computation~Graph algorithms analysis}

\keywords{Distributed algorithms; Congest; Distributed property testing; Planarity testing}
\fi 
\fi 

\ifnum\cready=1
\maketitle
\fi

\section{Introduction}\label{sec:intro}
Planarity is an important and well studied property of graphs.
In the setting of centralized algorithms, there are several algorithms that run in linear time for deciding whether a graph is planar~(e.g.,~\cite{LEC66,HT74,BM04}). 
In the context of distributed  algorithms in the \congest~\cite{P00} (and even {\sf LOCAL}~\cite{L92}) model, 
the number of rounds must be at least linear in the diameter of the graph (for any deterministic or randomized one-sided error algorithm).
One begging question is whether there exists an algorithm (in the \congest\ model) for deciding planarity whose round complexity matches (or is not too far) from this lower bound.\footnote{We note that Ghaffari and Haeupler~\cite{GH16} consider a different, but related question of finding a planar embedding of a planar graph. They give a distributed algorithm for this problem using $O(D\cdot \min\{\log n,D\})$ rounds
(where $D$ is the diameter and $n$ is the number of nodes).}
Another question is whether there exists a natural relaxation of this decision problem, which allows to obtain round complexity that does not depend on the diameter.

In this work we address the latter question, by considering the relaxation of {\em property testing\/} in
the distributed setting.
In all that follows, unless explicitly stated otherwise, when we refer to distributed algorithms, we mean in the \congest\ model.
%
Following Censor-Hillel et al.~\cite{CFSV16},  who recently initiated the study of distributed property testing, we require the following from the algorithm.  Let $G = (V,E)$ be a graph over $n$ nodes and $m$ edges.
If 
$G$
is planar, then all
nodes should output {\sf accept\/}, while if $G$ is $\eps$-far from being planar (i.e., more than $\eps\cdot m$ edges should be removed in order to make $G$ planar), then at least one node should output {\sf reject\/}.
The algorithm is allowed a bounded error probability,
where if it errs only on graphs that are $\eps$-far from being planar, then it is said to have {\em one-sided error\/}.%
\ifnum\podc=0
\footnote{Observe that if the algorithm has one-sided error and a constant error probability, then its error probability can be reduced to $\delta$, for a given parameter $\delta$, at a multiplicative cost of $\log(1/\delta)$ in the number of rounds. We are able to obtain error probability $1/\poly(n)$ without this extra cost.}
\fi

\sloppy
Our main result is a distributed one-sided error property testing algorithm for planarity that runs in $O(\log(n)\poly(1/\eps))$ rounds and succeeds with probability $1- 1/\poly(n)$.
We also show that $\Omega(\log n)$ rounds are necessary for any such algorithm and constant $\eps$ (even if the algorithm is allowed a constant error probability), implying that our result is tight up to the dependence on $\epsilon$.

In the context of (centralized) property testing, there is a line of work 
on two-sided error testing of planarity in bounded-degree graphs~\cite{BSS08,HKNO09,LR15,KSS19}.
The best known algorithm~\cite{KSS19} performs polynomial number of queries in $d$ and $\eps$, where $d$ is the degree bound (and succeeds with high constant probability).
The best one-sided error testing algorithm~\cite{KSS18}, for bounded-degree graphs, has query complexity $n^{1/2+o(1)}$, which is almost optimal.
There is no known sublinear testing algorithm for unbounded-degree graphs.
Note that in contrast,
in the distributed setting we are able to obtain an optimal (in terms of the dependence on $n$)
algorithm that has one-sided error and works for unbounded-degree graphs.

In addition to our main result, we show that under the promise that $G$ is planar (and more generally,
minor-free\footnote{Recall that a graph $H$
is a {\em minor\/} of a graph $G$ if $H$ is isomorphic to a graph that can be obtained
by zero or more edge contractions on a subgraph of $G$.
We say that a graph $G$ is
{\em $H$-minor free\/} (or {\em excludes $H$ as a minor\/}) if $H$ is not a minor of $G$.
 We say that $G$ is ``minor-free'' if it is $H$-minor free for
some fixed $H$ of constant size.}
for any fixed minor),
we can use our techniques to obtain other distributed property testing algorithms as
well as an algorithm for the construction of spanners.\footnote{A spanner of a graph $G$ is a
(sparse) subgraph
of $G$ that maintains distances up to a multiplicative factor, $s$, which is called the {\em stretch\/} factor,
and the spanner is referred to as an $s$-spanner.}

In the next two subsections we discuss our results in more detail.

\subsection{A high-level description of our algorithm for testing planarity}
\label{subsec-intro-high-level}
The algorithm works in two stages.
The goal of the first stage (which is deterministic) is to
partition the nodes of $G$ into parts
for which the following holds: (1) Each part is connected and has diameter $\poly(1/\eps)$;
 (2) The total number of edges between parts is at most $\eps m/2$.
 In the course of this stage, some node(s) may obtain evidence that the graph is not planar, and output {\sf reject}.
 This evidence is in the form of messages  received that are not consistent with the execution of the algorithm on
 any planar graph. 
   Conditioned on this stage completing successfully, if $G$ is $\eps$-far from being planar, then the subgraph induced by at least one of the parts in the partition is $(\eps/2)$-far from being planar. The goal of the second stage is to search for evidence in each part to non-planarity, by exploiting the fact that the diameter 
 of each part is small.

 We next give some more details about each stage of the algorithm.
 Before doing so we recall several notions and basic facts.
 A {\em forest decomposition\/} of a graph is a partition of its edges into forests.
 The {\em arboricity\/} of a graph is the
 minimum number of forests into which its edges can be partitioned.
 Any planar graph has arboricity at most $3$, 
 and if we perform any sequence of contractions of edges on a planar graph, then we obtain a planar graph.

\paragraph{The first stage.}
 \sloppy The first stage consists of $t=O(\log(1/\eps))$ phases. At the start of phase $i$, the nodes are
 partitioned into $k_i$ parts, denoted $P_i^1,\dots,P_i^{k_i}$
 where each part is connected and has diameter 
 at most $4^{i-1}$.  
 Let us denote this partition by
 $\calP_i$.
  In the initial partition, $\calP_1$, each part simply consists of a single node. For each phase, let
  $\calG_i$ denote the auxiliary weighted graph that results from contracting  each part $P_i^j$ into a single node, which we denote by $\ver(P_i^j)$. The weight of an edge
 $(\ver(P_i^j),\ver(P_i^{j'}))$ is the number of edges in $G$ with one endpoint in $P_i^j$ and the other in $P_i^{j'}$.

 Each phase starts by emulating the
 (deterministic distributed) forest decomposition algorithm of Barenboim and Elkin~\cite{BE10} on $\calG_i$ (ignoring the weights).  This algorithm works in $O(\log(n))$ rounds and gives the following guarantee. If $\calG_i$ has arboricity at most $\alpha$, then the algorithm provides a forest decomposition into at most $3\alpha$ forests. On the other hand, if the algorithm fails in defining such a decomposition, then at least one node (in $\calG_i$ and hence in $G$) has evidence that $\calG_i$  has arboricity greater than $\alpha$.


 Following the above {\em forest decomposition step\/}, the algorithm executes a 
 {\em merging step\/} (based on the clustering method of
 Czygrinow, Ha\'{n}\'{c}kowiak, and Wawrzyniak~\cite{CHW08}).
 In this step, parts of $\calP_i$ are merged, thus creating $\calP_{i+1}$.
%
 This merging procedure satisfies the following: (1) The maximum diameter 
 of the parts in $\calP_{i+1}$ is at most a constant factor larger than the maximum diameter 
 of the parts in $\calP_i$; (2)
 The 
 number of edges in $G$ between parts in $\calP_{i+1}$ (the total 
 weight of
  edges in $\calG_{i+1}$) is a constant fraction of the number of edges between parts in $\calP_i$
 (the total weight of edges in $\calG_i$).  
 The latter crucially
 relies on the bounded arboricity of $\calG_i$, which is ensured by the forest decomposition step.

\smallskip
Thus, the following central feature of (the first stage of) our algorithm emerges. Though we do not have a promise that the underlying graph is planar (indeed, it may be far from being planar), we are able to build on algorithms that work under the promise that the graph is planar. Namely, in each phase we verify that $\calG_i$ has bounded arboricity, by running an algorithm that works under the promise that the underlying graph has bounded arboricity. Failure of this algorithm is detected by at least one node in the graph. On the other hand, if this algorithm succeeds, then we are ensured that we shall make the desired progress in the transformation from $\calG_i$ to $\calG_{i+1}$ (in terms of the decrease in the total weight of edges).


\paragraph{The second stage.}
Assume the first stage completed successfully (where this always holds if $G$ is planar), and let $\calP = (P^1,\dots,P^k)$ be the final partition of the nodes (i.e., $\calP = \calP_t$).
Recall that the subgraph induced by each part is connected, and has diameter $\poly(1/\eps)$.
Furthermore, the first stage ensures the following for each part $P^j$: (1) There is a designated
{\em root\/} vertex in $P^j$, denoted $r^j$, where
each node in $P^j$ knows the identity of $r^j$; (2) There is an underlying spanning tree
in $P^j$, rooted at $r^j$, where each node in $P^j$ knows which of its incident edges is also incident to its parent in the tree, and which edges are incident to its children.

The second stage consists of two steps. In the first step, the (deterministic distributed) planar embedding algorithm of Ghaffari and Haeupler~\cite{GH16} is emulated on the subgraph induced by each part $P^j$, denoted
$G^j$.
The planar embedding algorithm works under the promise that $G^j$ is planar, and when it completes, each node in $G^j$ has a circular ordering over its incident edges that corresponds to a planar embedding (what is known as a {\em combinatorial embedding}).
Since the diameter of each part is $\poly(1/\eps)$,
the number of rounds performed by this algorithm is $\poly(1/\eps)$.
If this step fails in determining an ordering for all nodes (in the aforementioned number of rounds), then this constitutes evidence that $G^j$ is not planar.
However, it is possible that an ordering is determined though $G^j$ is not planar.

Hence, the second step in this stage is aimed at detecting non-planarity of some $G^j$ given the ordering provided by the~\cite{GH16} algorithm.\footnote{It was communicated to us by one of the authors of~\cite{GH16} that their algorithm can be modified so as to detect if the underlying graph is not planar~\cite{H-p-com}.
For the sake of a self-contained presentation, we rely on version of the algorithm as provided in~\cite{GH16} (which works under the promise that the graph is planar), and check that the ordering computed by~\cite{GH16} is consistent with a planar embedding.}
More precisely, as noted previously, if $G$ is $\eps$-far from being planar, then 
at least one $G^j$ is $(\eps/2)$-far from being planar. 
Using the ordering of edges incident to each node in $G^j$ together with a BFS tree
rooted at $r^j$,  denoted $T_B^j$ (which can be constructed in $\poly(1/\eps)$ rounds),
we define a certain condition on each of the non-tree edges of $T_B^j$. We show that if
 $G^j$ is far from being planar, then there are relatively many non-tree edges in $G^j$ that violate this condition, while if $G^j$ is planar, then no non-tree edge violates this condition. Furthermore, given a violating edge, it is possible to detect violation in 
 $\poly(1/\eps)$
 rounds.\footnote{We note that a previous version of the definition of this condition, given in the conference version of this paper~\cite{LMR18}, contained an error, which is fixed in the current version.}
 Hence, by sampling
$\Theta(\log(n)/\eps)$
 non-tree edges in each $G^j$ and running the detection procedure on each, a violation is detected with probability
 $1-1/\poly(1/n)$.


We note that, as shown in~\cite{ELM17,DISC17}, the algorithm of Elkin and Neiman~\cite{EN17} can be adapted to
obtain with high probability a partition of the nodes into parts of diameter $O(\log(n)/\eps)$ such that the number of edges between parts is at most $\eps m$. Replacing Stage~I in our algorithm with this procedure (and running Stage~II on each part), results in a testing algorithm that runs in $O(\log^2(n)\poly(1/\eps))$
(while our algorithm runs in $O(\log(n)\poly(1/\eps))$ rounds).

\subsection{Implications and applications for minor-free graphs}\label{subsec:intro-minor-free}
Suppose we have a promise that $G$ is planar, or more generally, minor-free for any constant size minor.
In such a case, the first stage of our algorithm always ensures that the nodes of $G$ are partitioned into
parts with diameter $\poly(1/\eps)$ such that the number of edges between parts is at most $\eps m/2$.
Such a partition can be used for testing properties such as cycle-freeness and bipartiteness (and more generally, hereditary properties that can be tested in a number of rounds that is linear (or even polynomial) in the diameter).
Thus, for cycle-freeness and bipartiteness we obtain a {\em deterministic\/} testing algorithm that runs in
$O(\log(n)\poly(1/\eps))$ rounds.
Such a partition can also be used to obtain $\poly(1/\eps)$-spanners (for unweighted graphs)
(deterministically)
in $O(\log(n)\poly(1/\eps))$ rounds. 

\sloppy
We also show how to modify the partition algorithm so as to obtain a tradeoff between the round-complexity and the success probability of the algorithm. More precisely, with probability at least $1-\delta$, the modified algorithm
gives the same guarantee as above for the partition
 in $O(\log(1/\eps)(\log^*(n)+\log(1/\delta)))$ rounds.
 The complexity of the testing algorithms and spanner construction algorithm are improved accordingly
 (see Corollaries~\ref{cor:cycle-free-bip} and~\ref{cor:spanner}, respectively).
 Finally, if
constant success probability suffices, then it is possibly to remove the dependence on $n$ completely.

The testing results can be compared with the $\Omega(\log n)$ lower bound of Censor-Hillel et al.~\cite{CFSV16}
for distributed testing of these properties on general (bounded-degree) graphs (with constant success probability).
\sloppy
The spanner result can be compared to the recent result of Elkin and Neiman~\cite{EN17}. 
They provide a $k$-round distributed algorithm for general (unweighted) graphs that with probability $1-\delta$ constructs a $(2k-1)$-spanner with $O(n^{1+1/k}/\delta)$ edges. 
In order to obtain an {\em ultra-sparse spanner}, namely, a spanner of size $n(1+o(1))$ (with probability $1-o(1))$, it is necessary to set $k = \omega(\log n)$.
In our context, of minor-free graphs, we can obtain an ultra-sparse spanner deterministically for any
$\eps = o(1)$, which allows to construct ultra-sparse spanners with stretch $s$
 for any $s = \omega(1)$.

\def\local{{\sf LOCAL}}

\subsection{Related work}
\ifnum\podc=1
In this section we overview results in distributed property testing, 
as well as distributed algorithms
that operate on a graph with a promised property. We overview related results in centralized property testing in Appendix~\ref{a:related}.
\fi

\medskip
\noindent
\textbf{Distributed Property Testing.}~
As noted previously, the study of distributed property testing was initiated by Censor-Hillel et al.~\cite{CFSV16}.
In particular, they designed and analyzed distributed property testing algorithms for: triangle-freeness, cycle-freeness, and bipartiteness. As noted previously, they also proved a logarithmic lower bound for the latter two properties.
Fraigniaud et al.~\cite{FRST16} studied distributed property testing of excluded subgraphs of size $4$ and $5$.


Since the appearance of the above papers, there was a 
fruitful line of research in distributed property testing for various properties, mainly focusing on properties of whether a graph excludes a sub-graph, e.g., triangle-freeness, cycle-freeness, subgraphs of constant size, tree-freeness, clique freeness~\cite{FO17,FGO17c,FMORT17,ELM17,DISC17,FGO17a}.
In~\cite{FY17} the problem of testing the conductance of the input graph was studied, and a two-sided error tester is given.

Brakerski and Patt-Shamir~\cite{BP11} considered a related problem in the distributed setting.
They show how to find a subset of vertices that is $\eps$-close to being a clique if there is
a subset that is $\eps^3$-close to being a clique.

\medskip
\noindent
\textbf{Distributed Algorithms with a promise in the \congest\ model.}~
There is a large variety of distributed algorithms that
work under a promise that the graph is planar (similarly, excludes a fixed minor, has bounded arboricity, and more).
See e.g.~\cite{BE10,LW10,LPW13,ASS16,GH16,GHb16,HIZ16,HIZb16,GL17,GP17,HLZ18}.
\newcommand{\ptestingrel}{
\ifnum\podc=0
\ifnum\cready=0
\paragraph{Centralized Property Testing.}
\else

\medskip
\noindent
\textbf{Centralized Property Testing.}~
\fi
\fi
Most works in property testing on $H$-minor freeness (and related properties) focus on two-sided error testers.
The problem of testing general $H$-minor freeness was studied by Benjamini, Schramm and Shapira in \cite{BSS08}.
They showed that every minor-closed property of bounded degree graphs is testable with two-sided error with query complexity which is independent of the size of the graph.
Subsequent work~\cite{HKNO09,LR15} improved the dependence on the proximity parameter, $\eps$.
Yoshida and Ito~\cite{YI15} provided a tester with two-sided error for outerplanarity whose query complexity is $\poly(1/\eps)$.
Eden et~al.~\cite{ELR18} studied tolerant testing of bounded arboricity in the general graph model and showed almost tight bounds in terms of the dependence on $n$ and $m$.

Czumaj et~al.~\cite{CGRSSS14} studied property testing of $H$-minor freeness with one-sided error. They proved that for $H$ which is a forest, $H$-minor freeness can be tested in query complexity which is independent on $n$. For any $H$ that contains a cycle they showed a lower bound of $\Omega(\sqrt{n})$ queries. They also provided almost matching upper bounds for any $H$ which is a simple cycle.
Recently, Fichtenberger et~al.~\cite{FLVW17} provided a $\tilde{O}(n^{2/3})$-query tester for outerplanarity, and other properties that can be characterized by forbidden minors, with one-sided error.

Czumaj et~al.~\cite{CMOS11} studied testing bipartitness under the promise the input graph is planar. While in general, the query complexity of testing bipartitness is $\Omega(\sqrt{n})$, even for bounded degree graphs, they showed that under the promise of planarity, bipartiteness can be tested in time which is independent of $n$, even if the maximum degree is unbounded.
}
\ifnum\podc=0
\ptestingrel
\fi 

\ifnum\podc=1
\subsubsection*{Organization}
In Section~\ref{sec:plan-test} we describe and analyze the algorithm for testing planarity, and in Section~\ref{sec:minor-free} we state the additional results regarding implications and applications for minor-free graphs. All missing details for these sections, as well as the lower bound, can be found in the appendices.
\fi


\section{The algorithm for testing planarity}\label{sec:plan-test}
In this section we establish the following theorem.
\begin{theorem}\label{thm:test-plan}
There exists a distributed one-sided error property testing algorithm for planarity that runs in
$O(\log(n)\cdot\poly(1/\eps))$ rounds in the \congest\ model.  
\end{theorem}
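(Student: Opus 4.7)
The plan is to prove the theorem by analyzing the two-stage algorithm sketched in Section~\ref{subsec-intro-high-level}, establishing correctness (one-sided error on planar inputs), soundness (detection on $\eps$-far inputs), and round complexity separately. Throughout, I would maintain the invariant that at the start of phase $i$ of Stage~I, if no node has rejected, then the current contracted multigraph $\calG_i$ has arboricity at most $3$. If $G$ is planar, this holds automatically since planarity, and hence the bound of $3$ on arboricity, is preserved under edge contraction. If $G$ is not planar, then either the invariant happens to hold (and the phase proceeds honestly) or the Barenboim--Elkin forest decomposition fails and some node rejects; either way no false rejection occurs on a planar $G$, which gives one-sided error.

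For Stage~I, I would formalize the merging step following Czygrinow--Ha\'{n}\'{c}kowiak--Wawrzyniak and prove the two key quantitative properties: (i) the diameter of each part in $\calP_{i+1}$ is at most a constant factor larger than in $\calP_i$, so that after $t = O(\log(1/\eps))$ phases every part has diameter $\poly(1/\eps)$; and (ii) the total weight of edges in $\calG_{i+1}$ is at most a constant fraction $c < 1$ of that in $\calG_i$, exploiting the arboricity bound. Iterating (ii) for $t = \Theta(\log(1/\eps))$ phases drives the total inter-part weight below $\eps m/2$. Combined with the fact that each phase runs the Barenboim--Elkin forest decomposition in $O(\log n)$ rounds (emulated on $\calG_i$ with $\poly(1/\eps)$ overhead per round to communicate within parts of diameter $O(4^i)$), Stage~I takes $O(\log(n) \cdot \poly(1/\eps))$ rounds.

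For Stage~II, conditional on Stage~I succeeding I would argue that if $G$ is $\eps$-far from planar, then since at most $\eps m/2$ edges go between parts, at least one induced subgraph $G^j$ must be $(\eps/2)$-far from planar. On each part in parallel I would emulate the Ghaffari--Haeupler planar embedding algorithm in $\poly(1/\eps)$ rounds (using the diameter bound on $P^j$); if it fails, a node rejects. Otherwise I would build a BFS tree $T_B^j$ rooted at $r^j$ and, using the ordering together with the tree, define the local condition on each non-tree edge. The central lemma to prove is a ``robustness'' statement: if $G^j$ is planar, then no non-tree edge violates the condition; if $G^j$ is $(\eps/2)$-far from planar, then an $\Omega(\eps)$-fraction of non-tree edges violate it. Sampling $\Theta(\log(n)/\eps)$ non-tree edges per part and verifying each in $\poly(1/\eps)$ rounds then detects a violation with probability $1 - 1/\poly(n)$ by a union bound over parts.

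The main obstacles will be two. The first is the merging analysis: to obtain a geometric decrease in the weight of $\calG_i$ while keeping diameter growth at a constant factor, the merging must carefully orient forest edges and contract along matchings of large weight, and the argument must charge enough weight per phase using the arboricity-$3$ bound. The second, more delicate, obstacle is defining the verification condition on non-tree edges so that planarity is exactly characterized by \emph{no violating edge} (for soundness against false rejection on planar $G^j$) while non-planarity forces $\Omega(\eps)$ of the non-tree edges to violate (for detection). This requires combinatorial arguments relating the circular ordering produced by~\cite{GH16} to genuine combinatorial embeddings, and an edge-removal argument converting far-from-planarity into a lower bound on the number of violating non-tree edges. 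Once these two components are in place, putting the round complexities together yields the claimed $O(\log(n) \cdot \poly(1/\eps))$ bound and completes the proof.
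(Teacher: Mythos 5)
Your proposal follows the paper's proof essentially step for step: Stage~I is the Barenboim--Elkin forest decomposition combined with the Czygrinow--Ha\'{n}\'{c}kowiak--Wawrzyniak merging step, with the same arboricity invariant yielding one-sided error and geometric decay of inter-part weight; Stage~II is the Ghaffari--Haeupler embedding followed by a BFS-tree-based violation condition on non-tree edges, with the same ``planar $\Rightarrow$ no violating edge / no violating edge $\Rightarrow$ planar'' characterization and the same $\Theta(\log(n)/\eps)$-sample detection argument. You have correctly identified the two genuine technical obstacles (the weight-decay analysis of merging and the exact characterization of planarity via violating edges), which is where the bulk of the paper's work lies.
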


As noted in the introduction (Section~\ref{subsec-intro-high-level}), our algorithm works in two stages:
a {\em Partition stage\/} and a {\em Planarity testing stage\/}.
In what follows we describe and analyze each in detail, where we refer to the first as Stage~I and the second as Stage~II. %
\ifnum\podc=1
All missing details for this section can be found in Appendix~\ref{a:partition}.
\fi



 \subsection{A detailed description and analysis of Stage~I}\label{subsec:stage1}


Recall from the description in the introduction (Section~\ref{subsec-intro-high-level}) that
 Stage~I consists of $t= O(\log(1/\eps))$ phases.  Each phase is associated with a partition of the nodes,
 where in the initial partition each node is a singleton part, and in general, each part is connected.
 We denote by $\calP_i = (P_i^1,\dots,P_i^{k_i})$ the partition associated with Phase $i$.
 Each partition $\calP_i$ defines an auxiliary weighted graph, denoted $\calG_i$.
 Each part $P_i^j$ corresponds to a node in $\calG_i$, denoted $\ver(P_i^j)$, and the weight of
 an edge $(\ver(P_i^j),\ver(P_i^{j'}))$ is the number of edges with one endpoint in $P_i^j$ and the
 other in $P_i^{j'}$. Each part in $\calP_{i+1}$ is a union of several parts in $\calP_i$.

 Each phase consists of two steps: A {\em forest decomposition step\/} and a {\em merging step\/}.
 We first describe these steps in terms of the auxiliary graphs $\{\calG_i\}_{i=1}^{t+1}$ (so that each
 part $P_i^j$ is viewed as a single node, which may be denoted by $\calv$ or $\calu$ rather than
 $\ver(P_i^j)$).
 We then explain how they are emulated  on $G$.

\paragraph{Sending message up and down trees.}
 Both in $G$ and in the auxiliary graphs $\{\calG_i\}_{i=1}^{t+1}$, we consider distributing information on trees (where each node
 knows which of its incident edges is incident to its parent in the tree and which of these edges are incident to its children). For such a tree $T$, let $r(T)$ denote its root.
 When we say that $r(T)$ sends a message {\em down the tree\/}, we mean that $r(T)$ sends the message to its children in $T$ and they send it to their children, and so on, until the message reaches all nodes in $T$.
 In some cases the message may be augmented/modifed as it goes down the tree.
When we say that the a node $v \in T$ sends a message {\em up the tree\/}, we mean that $v$ sends the message to its parent in $T$, which in turn send it to its parent, and so on, until it reaches the root $r(T)$.
Here too a message may be modified as it goes up the tree. In particular,
if several nodes simultaneously send different messages up the tree, then this may cause congestion, and we
explain how this is addressed whenever it arises.

 \subsubsection{The forest decomposition step on $\calG_i$}\label{subsubsec:forest-Gi}
This step correspond to the forest decomposition algorithm of Barenboim and Elkin~\cite{BE10}.
 Their algorithm works under the promise that the underlying
graph $\calG_i$ has arboricity at most  $\alpha$ where in our case $\alpha$ is set to 3 (the bound on the arboricity of planar graphs). The algorithm ignores the weights on the edges of $\calG_i$, and
proceeds as follows. Initially all
 nodes in $\calG_i$ are {\em active}.
For $s = \Theta(\log n)$ rounds, each active node $\calu$ does the following.
If $\calu$  has at most $3\alpha$ active neighbors (in the current round),
then $\calu$ sends a message to all its neighbors that it becomes {\em inactive\/} in the next round.
As shown in~\cite{BE10} (and is not hard to verify), if $\calG_i$ has arboricity at most $\alpha$,
then in each round a constant fraction of the nodes become inactive.
Since the number of rounds is $\Theta(\log n)$ (and the number of nodes in $\calG_i$ is at most $n$),
if $\calG_i$ has arboricity at most $\alpha$,
 all the nodes are inactive by the termination of the algorithm.
In other words, if some node remains active after $s$ rounds, then this serves as evidence that
$\calG_i$ has arboricity larger than $\alpha$.

If the process terminates successfully (i.e., all nodes become inactive), then
it is possible  to define a forest decomposition into at most $3\alpha$ forests. That is, it is possible to orient the edges of $\calG_i$ so that each node in $\calG_i$ has at most $3\alpha$ outgoing edges, one to each of its parents in the different forests (where no cycles are formed). Specifically, consider a node $\calu$ that becomes inactive in round $\ell$, and let $\calv_1,\dots,\calv_d$, for $d\leq 3\alpha$, be its active neighbors at the start of round $\ell$. For each $\calv_q$ such that $\calv_q$ remains active in round $\ell+1$, we orient the edge $\{\calu,\calv_q\}$ from $\calu$ to $\calv_q$. For each $\calv_q$ such that
$\calv_q$ also becomes inactive in round $\ell$, we orient the edge $\{\calu,\calv_q\}$
from the node with the smaller id to the node with the larger id (so that no directed cycles are formed).
We emphasize that each node knows the orientation of its incident edges.

\subsubsection{The merging step: from $\calG_i$ to $\calG_{i+1}$} \label{subsubsec:merge-Gi}
Assuming the forest decomposition step completed successfully (and hence each node in
$\calG_i$ has at most $3\alpha$ outgoing edges pointing to its parents in the forest decomposition),
the merging step consists of the following  sub-steps.
Sub-step~\ref{substep:subtrees} is as in~\cite{CHW08}.
\ifnum\podc=1
\begin{compactenum}
\else
\begin{enumerate}
\fi
\item  \label{substep:heaviest}
Each node $\calu$ in $\calG_i$ selects its outgoing edge $(\calu,\calv)$ that has the highest weight.
Let $\calF_i$ denote the forest induced by the selected edges.
\item \label{substep:subtrees} Select a set of ``shallow'' subtrees of $\calF_i$, denoted $\calT_i$,
as follows:
  \begin{enumerate}
\item\label{substep:cole-vishkin}
Obtain a coloring $\chi$ of $\calF_i$ using colors in $\{1,2,3\}$
by running the distributed algorithm of Cole and Vishkin~\cite{CV86}, and Goldberg, Plotkin and Shannon~\cite{GPS88}.
\item\label{substep:mark} Mark the edges of $\calF_i$ according to the rules defined next
(where if a node marks an incident edge, it notifies the other endpoint).
\ifnum\podc=1
\begin{compactitem}
\else
\begin{itemize}
\fi
\item
For each node $\calu$
such that $\chi(\calu)=1$,  $\calu$ marks its outgoing edge (assuming such exists) if the weight of this edge is greater or equal to the sum of the weights of all its incoming edges. Otherwise, $\calu$  marks all its incoming edges.
\item For each node $\calu$ such that $\chi(\calu)=2$,
 $\calu$ marks its outgoing edge if the other endpoint is colored $3$ and the weight of this edge is
 greater or equal to the sum of the weights of all its incoming edges whose other endpoint is colored $3$. Otherwise it marks all these incoming edges.
\ifnum\podc=1
\end{compactitem}
\else
\end{itemize}
\fi
\item Let $\calT_i$ be the set of subtrees induced by the marked edges.
  \end{enumerate}
  \item\label{substep:stars}  For each subtree $T \in \calT_i$, let $w_0(T)$ denote the total weight of edges that go from an even level in $T$ up to an odd level (referred to as ``even edges''), and let $w_1(T)$ denote the total weight of the remaining edges in $T$ (referred to as ``odd edges''). The root of $T$, denoted $r(T)$, obtains $w_0(T)$ and $w_1(T)$ (by sending a message down the tree so that each node learns its level, and receiving message sent up the tree in
      which weights of even/odd edges are summed).
      If $w_0(T) \geq w_1(T)$, then $r(T)$ sends the message `0' down the tree, and otherwise it sends `1'.
  \item \label{substep:star-contract} If the message sent down the tree is `0', then all even edges are contracted, and otherwise all odd edges are contracted.
\ifnum\podc=1
\end{compactenum}
\else
\end{enumerate}
\fi
Observe that each node in $\calG_{i+1}$ corresponds to a star subgraph in $\calG_i$.

\medskip
Let $w(\calG_i)$ denote the total weight of edges in $\calG_i$ (and similarly define
$w(\calF_i)$ and $w(\calT_i)$).
\begin{ourclaim}\label{clm:merge}
The merging step runs in $O(\log^* n)$ rounds (on $\calG_i$) and
$w(\calG_{i+1}) \leq \left(1 - \frac{1}{12\alpha}\right)\cdot w(\calG_i)$. 
\end{ourclaim}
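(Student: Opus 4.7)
The proof splits into bounding the round complexity and bounding the weight decrease in separate arguments.

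\emph{Round complexity.} The $3$-coloring in Sub-step~\ref{substep:cole-vishkin} can be computed on the (rooted, arbitrary-degree) forest $\calF_i$ via parent pointers using the Cole--Vishkin and Goldberg--Plotkin--Shannon reductions in $O(\log^* n)$ rounds. The marking in Sub-step~\ref{substep:mark} is purely local, costing $O(1)$ rounds. Sub-steps~\ref{substep:stars} and~\ref{substep:star-contract} require sending a level-parity bit and aggregating $w_0(T), w_1(T)$ up and down each subtree $T \in \calT_i$, which takes $O(\mathrm{depth}(T))$ rounds. Hence it suffices to show every $T \in \calT_i$ has depth $O(1)$. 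I will argue this by a short case analysis of the marking rules: color-$3$ nodes never mark any edge, color-$1$ nodes mark either their single outgoing edge or all of their incoming edges (never both), and color-$2$ nodes mark only edges incident to color-$3$ neighbors. These restrictions force every interior node of a marked subtree to have color $\neq 1$, and the realizable color triples $(c_{i-1},c_i,c_{i+1})$ at interior positions are so constrained that the chain of interior nodes must terminate within a constant number of levels.

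\emph{Weight decrease.} I chain three inequalities:
\[
w(\calG_i) - w(\calG_{i+1}) \;\ge\; \tfrac{1}{2}\, w(\calT_i) \;\ge\; \tfrac{1}{4}\, w(\calF_i) \;\ge\; \tfrac{1}{12\alpha}\, w(\calG_i).
\]
The leftmost bound holds because for each $T \in \calT_i$, Sub-step~\ref{substep:star-contract} contracts the heavier of the even-edge set and the odd-edge set, whose total weight is at least $w(T)/2$. The rightmost bound holds because each node of $\calG_i$ has at most $3\alpha$ outgoing edges (by the forest-decomposition step) and Sub-step~\ref{substep:heaviest} retains only the heaviest one, giving $w(\calF_i) \ge w(\calG_i)/(3\alpha)$.

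The middle inequality $w(\calT_i) \ge \tfrac{1}{2}\, w(\calF_i)$ is the key accounting step, and follows from a double counting. Under the proper $3$-coloring of $\calF_i$, each edge is \emph{considered} by exactly one of its endpoints: the color-$1$ endpoint whenever one of the endpoints has color $1$, and otherwise (for $\{2,3\}$-edges) the color-$2$ endpoint. For any node $\calu$ with color in $\{1,2\}$, the marking rule selects the heavier side between $\calu$'s single outgoing edge and the sum of $\calu$'s considered incoming edges, so $\calu$ marks at least half the total weight it considers. Summing over $\calu$ and using that each edge of $\calF_i$ is considered exactly once yields the desired bound.

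The main obstacle is the depth bound for the subtrees in $\calT_i$: the coloring is a black-box invocation and the weight accounting is a clean double counting, but establishing that the specific marking rules preclude deep marked chains requires the color-triple case analysis sketched above, and this is precisely what enables the claimed $O(\log^* n)$ round complexity.
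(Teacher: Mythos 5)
Your chain of inequalities
\[
w(\calG_i) - w(\calG_{i+1}) \;\ge\; \tfrac{1}{2}\,w(\calT_i) \;\ge\; \tfrac{1}{4}\,w(\calF_i) \;\ge\; \tfrac{1}{12\alpha}\,w(\calG_i)
\]
is exactly the one the paper uses, and the round-complexity accounting (Cole--Vishkin/GPS in $O(\log^* n)$ rounds, everything else local or bounded by the constant depth of trees in $\calT_i$) is the same. The difference is one of self-containment: the paper simply cites~\cite[Section~2]{CHW08} for the two nontrivial facts, namely that $w(\calT_i) \ge w(\calF_i)/2$ and that every $T\in\calT_i$ has height at most $10$, whereas you re-derive both. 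Your weight argument is clean and complete: each edge of $\calF_i$ has a unique ``owner'' (its color-$1$ endpoint if one exists, otherwise its color-$2$ endpoint), every owner marks the heavier of its single outgoing edge versus the sum of its owned incoming edges, hence marks at least half of its owned weight, and summing over owners gives $w(\calT_i)\ge w(\calF_i)/2$. Your depth argument is sketched rather than spelled out, but the idea is sound and can be carried through: a color-$1$ node owns all its incident edges in $\calF_i$ and marks only one side, so it cannot have both a marked incoming and a marked outgoing edge, hence cannot be interior; therefore interior colors alternate in $\{2,3\}$, and the pattern $3,2,3$ is also excluded (the middle color-$2$ node would again have to mark both an incoming and an outgoing $3$-edge). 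A short enumeration then forces any marked directed path to have length at most $4$ --- in fact tighter than the bound of $10$ quoted from~\cite{CHW08} --- which is all that is needed. So the proposal is correct and, once the triple case analysis is written out, more elementary than the paper's proof in that it avoids the \cite{CHW08} black box.
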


\newcommand{\mergestepclaim}{
\begin{proof}
First observe that by the definition of $\calF_i$ (in Sub-step~\ref{substep:heaviest}),
$w(\calF_i) \geq w(\calG_i)/3\alpha$. By~\cite{CV86}, the number of rounds performed in
Sub-step~\ref{substep:cole-vishkin} is $O(\log^* n)$ (and hence $O(\log^* n)$ bounds
the number of rounds performed in all of Sub-step~\ref{substep:subtrees}).
By the analysis presented in~\cite[Section 2]{CHW08},
$w(\calT_i) \geq w(\calF_i)/2$, and the height of each tree $T$ in $\calT_i$ is at most 10.
Therefore, Sub-step~\ref{substep:stars} runs in a constant number of rounds (and the same is true of
Sub-step~\ref{substep:star-contract}). Finally, by the choice of which edges to contract
(in Sub-step~\ref{substep:stars}),
the weight of the contracted edges is at least $w(\calT_i)/2$. The claim follows.
\end{proof}
}
\ifnum\podc=0
\mergestepclaim
\fi

\subsubsection{Successful completion of Stage~I}\label{subsubsec:success-I}
Before turning to the emulation of Stage~I on $G$, we introduce one definition and two claims, whose correctness does not depends on the details of the emulation.

\begin{definition}\label{def:stage1-success}
 We say that Stage~I {\sf completes successfully} if  the forest decomposition step in each phase
 terminates with no remaining active node.
\end{definition}
  \begin{ourclaim}
 \label{clm:stage-I-correct1}
 If $G$ is planar, then Stage~I always completes successfully. If $G$ is $\eps$-far from being planar, then
 either Stage~I does not complete successfully, or
 $w(\calG_{t+1}) \leq\eps m/2$.
  \end{ourclaim}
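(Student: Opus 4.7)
The plan is to prove the two implications separately. The first part is a closure argument: the auxiliary graphs $\{\calG_i\}$ remain planar throughout Stage~I when $G$ is planar, so the Barenboim--Elkin subroutine (invoked with $\alpha=3$) always succeeds. The second part follows by iterating Claim~\ref{clm:merge} for $t=\Theta(\log(1/\eps))$ phases.

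First, suppose $G$ is planar. I will argue by induction on $i$ that $\calG_i$ is planar (viewed as a multigraph, or equivalently, an unweighted graph by ignoring edge weights, since planarity depends only on the underlying graph structure). For the base case, $\calG_1$ is obtained from $G$ by treating each vertex as a singleton part, so it is (isomorphic to) $G$ and thus planar. For the inductive step, $\calG_{i+1}$ is obtained from $\calG_i$ by contracting each star (induced by the marked edges chosen in Sub-step~\ref{substep:star-contract}) into a single vertex. Since planarity is closed under edge contraction, $\calG_{i+1}$ is planar. Because every planar graph has arboricity at most $3 = \alpha$, the forest-decomposition step run on $\calG_i$ satisfies the promise required by~\cite{BE10}, so within its $\Theta(\log n)$ rounds every node of $\calG_i$ becomes inactive. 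Consequently Stage~I completes successfully on every planar $G$.

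Next, suppose $G$ is $\eps$-far from planar. If Stage~I does not complete successfully, the first disjunct of the conclusion holds and we are done. Otherwise, by Definition~\ref{def:stage1-success}, the forest-decomposition step succeeded in every phase, so for each $i=1,\dots,t$ the hypothesis of Claim~\ref{clm:merge} holds. Applying that claim once per phase gives
\[
 w(\calG_{t+1}) \;\leq\; \left(1-\tfrac{1}{12\alpha}\right)^{t}\cdot w(\calG_1) \;=\; \left(1-\tfrac{1}{36}\right)^{t}\cdot m,
\]
using $\alpha=3$ and $w(\calG_1)=m$. Choosing $t = \lceil \ln(2/\eps)/\ln(36/35)\rceil = O(\log(1/\eps))$ (which is the value fixed in Stage~I) makes the right-hand side at most $\eps m/2$, yielding the second disjunct. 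The main conceptual point is simply that planarity is contraction-closed, so the promise needed by the forest-decomposition subroutine is preserved across the phases; there is no real obstacle beyond invoking this standard fact and iterating the merging bound.
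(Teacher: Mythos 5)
Your proof is correct and follows essentially the same route as the paper: the first part relies on planarity being closed under taking minors (so each $\calG_i$ stays planar and hence has arboricity at most $3$), and the second part iterates Claim~\ref{clm:merge} over $t = \Theta(\log(1/\eps))$ phases starting from $w(\calG_1) = m$. The paper states this in two terse sentences; you have simply unwound the induction and the choice of $t$ explicitly.
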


\newcommand{\stagecomplete}{
 \begin{proof}
 The first part of the claim follows immediately from the fact that the arboricity of planar graphs is at most $3$ and that any  minor of a planar graph is planar.
  The second part of the claim follows from
 the fact that $w(\calG_1) = m$,
 Claim~\ref{clm:merge}, and the setting of the number of phases $t = \Theta(\log(1/\eps))$.
 \end{proof}
}
\ifnum\podc=0
\stagecomplete
\fi

    \begin{ourclaim}
  \label{clm:stage-I-correct2}
 For each phase $i$ and part $P_i^j$, the subgraph induced by $P_i^j$ is connected and has
 diameter at most $4^i$.
 \end{ourclaim}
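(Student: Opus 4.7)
The plan is a straightforward induction on the phase index $i$. The base case $i=1$ is immediate, since $\calP_1$ is the singleton partition: each $P_1^j$ is a single node, connected, with diameter $0 \leq 4^1$. The inductive hypothesis asserts that every $P_i^j \in \calP_i$ is connected and has diameter at most $4^i$.

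For the inductive step, I would consider an arbitrary $P_{i+1}^{j'} \in \calP_{i+1}$ and unpack what the merging step (Sub-steps~\ref{substep:heaviest}--\ref{substep:star-contract}) actually does to it. The key structural observation is that $P_{i+1}^{j'}$ is either an unchanged $P_i^{j_0}$ (if the corresponding node of $\calG_i$ was not incident to any contracted edge) or else corresponds to a depth-$1$ star in $\calG_i$: a ``center'' part $P_i^{j_0}$ together with some leaf parts $P_i^{j_1},\dots,P_i^{j_k}$, each linked to the center by a single contracted edge of $\calG_i$ (which represents at least one actual edge of $G$ between the two parts). The reason is that Sub-step~\ref{substep:star-contract} contracts only \emph{alternate-level} edges of each marked subtree $T \in \calT_i$ (either all ``even'' edges or all ``odd'' edges), so a newly merged node is always a node of $T$ together with some subset of its children at one particular level---never reaching a grandchild, and never incorporating both a parent and a child of the same node.

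Granted the star shape, connectedness of $P_{i+1}^{j'}$ is immediate from the inductive hypothesis (each $P_i^{j_a}$ is internally connected) combined with the center-to-leaf edges in $G$ coming from the contracted edges. For the diameter bound, the worst-case pair $u,v \in P_{i+1}^{j'}$ lies in two distinct leaf parts $P_i^{j_a}, P_i^{j_b}$ with $a,b\neq 0$ and $a\neq b$; a path going from $u$ to the relevant boundary node of $P_i^{j_a}$, across one edge into $P_i^{j_0}$, through $P_i^{j_0}$ to the boundary endpoint of the edge toward $P_i^{j_b}$, across into $P_i^{j_b}$, and finally to $v$, has length at most $4^i + 1 + 4^i + 1 + 4^i = 3\cdot 4^i + 2$, which is bounded by $4^{i+1}$ for $i \geq 1$. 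The remaining cases (both endpoints in the same $P_i^{j_a}$, or one in the center and one in a leaf) are strictly easier.

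The main obstacle is pinning down the depth-$1$ star claim cleanly from the procedural description of merging; once this is in hand the rest is routine arithmetic. The height-at-most-$10$ bound from Claim~\ref{clm:merge} plays no direct role here---what matters is the parity-based contraction rule of Sub-step~\ref{substep:star-contract}, which ensures that any two constituent parts merged into a single node of $\calG_{i+1}$ were at distance exactly $1$ in $\calG_i$, so that ``traversing the merged node'' costs only $3\cdot 4^i + 2$ in $G$ rather than blowing up geometrically.
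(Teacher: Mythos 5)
Your proof is correct and takes essentially the same route as the paper: induction on $i$, with the key step being that each part of $\calP_{i+1}$ is a depth-$1$ star of parts of $\calP_i$ (a center plus leaves, each joined to the center by at least one $G$-edge), giving the bound $3\cdot 4^i + 2 \le 4^{i+1}$. The paper simply asserts the star structure as an observation at the end of the merging step; you go one step further and justify it from the parity-based contraction rule of Sub-step~\ref{substep:star-contract}, which is the right reason and a worthwhile addition. You also correctly note that the height-at-most-$10$ bound on trees in $\calT_i$ is irrelevant here (it only matters for the round complexity of computing $w_0,w_1$), and that it is the even/odd alternation that prevents merged parts from ever being more than one $\calG_i$-hop apart.
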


\newcommand{\connecdiam}{
 \begin{proof}
 We prove the claim by induction on $i$. The claim trivially holds for $i=1$.
 To establish the induction step,
consider a single merging step in which the nodes $\ver(P_i^{j_1}),\dots,\ver(P_i^{j_s})$ in $\calG_i$ all merge with $\ver(P_i^{j_0})$ (that is, the edges $\left\{\left(\ver(P_i^{j_q}),\ver(P_i^{j_0})\right)\right\}_{q=1}^s$
were contracted).
Clearly, the subgraph induced by $\bigcup_{q=0}^s P_i^{j_q}$ (which corresponds to a part
in $\calP_{i+1}$) is connected.
As for the diameter of this subgraph, by the induction hypothesis,
it is at most $3\cdot 4^i +2 \leq 4^{i+1}$.
\end{proof}
}
\ifnum\podc=0
\connecdiam
\fi

\medskip
Conditioned on Stage~I completing successfully, let
 $\calP = (P^1,\dots,P^k)$ denote the final partition (i.e., $\calP = \calP_{t+1}$).
 For each $j \in [k]$, let $G^j = G(P^j)$ denote the subgraph induced by $P^j$.
 As a corollary of Claim~\ref{clm:stage-I-correct2} we get:
 \begin{corollary}\label{cor:stage-I-correct2}
 Each $G^j$ is connected and has diameter $\poly(1/\eps)$.
 \end{corollary}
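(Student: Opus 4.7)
\textbf{Proof plan for Corollary~\ref{cor:stage-I-correct2}.} The plan is to derive the corollary as an immediate consequence of Claim~\ref{clm:stage-I-correct2} (the connectivity-and-diameter bound for parts at every phase), combined with the fact that the number of phases $t$ is $\Theta(\log(1/\eps))$. Since $\calP = \calP_{t+1}$ by definition, each final part $P^j$ is a part of the form $P_{t+1}^j$, so we may apply Claim~\ref{clm:stage-I-correct2} with $i = t+1$.

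First, I would invoke Claim~\ref{clm:stage-I-correct2} to conclude that each $P^j$ is connected and has diameter at most $4^{t+1}$ in the subgraph it induces; this immediately gives the connectivity assertion, and since $G^j = G(P^j)$ is precisely that induced subgraph, the diameter statement carries over verbatim. Then I would substitute the value of $t$: because Stage~I consists of $t = O(\log(1/\eps))$ phases (as set in Section~\ref{subsec:stage1} so that Claim~\ref{clm:merge} drives $w(\calG_i)$ down from $m$ to at most $\eps m/2$), we obtain
\[
4^{t+1} \;=\; 4 \cdot 4^{O(\log(1/\eps))} \;=\; (1/\eps)^{O(1)} \;=\; \poly(1/\eps),
\]
as required.

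There is really no obstacle here: all the work is packed into Claim~\ref{clm:stage-I-correct2} (inductive connectivity and the $4^i$ diameter growth under the merging step) and into the choice of $t$. The only thing to be mindful of is consistency of the index on the final partition; since I am writing $\calP = \calP_{t+1}$, I should apply Claim~\ref{clm:stage-I-correct2} at index $t+1$ rather than $t$, so the exponent in the diameter bound is $t+1$ and not $t$, which is absorbed into the $\poly(1/\eps)$ anyway.
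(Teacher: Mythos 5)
Your proposal matches the paper's (implicit) reasoning exactly: the corollary is stated as an immediate consequence of Claim~\ref{clm:stage-I-correct2} applied at index $t+1$, together with $t = O(\log(1/\eps))$ so that $4^{t+1} = \poly(1/\eps)$. Your attention to using index $t+1$ rather than $t$ is a correct and careful detail.
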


\subsubsection{Preliminaries for the emulation} 

For each phase $i \in [t]$ and part $P_i^j\in \calP_i$, let $G_i^j$ denote the subgraph induced by $P_i^j$.
We say that a node $u\in P_i^j$ is a {\em boundary\/} node, if at least one of its neighbors belongs to
a part $P_i^{j'}$ for $j' \neq j$.

In Section~\ref{subsec:emulate-merge} we establish the following lemma.
\begin{lemma}\label{lem:span-tree}
For every $i\in [t]$ and every $P_i^j \in \calP_i$, there is a unique root node $r_i^j \in P_i^j$, such that each node in $P_i^j$ knows the identity of $r_i^j$. Furthermore, there is a spanning tree of $G_i^j$, rooted at $r_i^j$ and denoted $T_i^j$, for which the following holds. Each node $u$ in $G_i^j$ knows which of its incident edges is incident to its parent in $T_i^j$ and which of these edges is incident to its children.
\end{lemma}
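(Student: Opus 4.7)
The plan is to prove the lemma by induction on the phase index $i$, using the structure of the merging step described in Section~\ref{subsubsec:merge-Gi}. For the base case $i=1$, each part $P_1^j$ consists of a single node $u$, so we set $r_1^j = u$ and let $T_1^j$ be the trivial one-vertex tree; all assertions hold vacuously.

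For the inductive step, assume the lemma holds for $\calP_i$ and examine how each new part $P_{i+1}^\ell \in \calP_{i+1}$ is formed. By the description of the merging step, either (a) the corresponding node of $\calG_i$ was not contracted, so $P_{i+1}^\ell = P_i^j$ for some $j$ and we reuse $r_i^j$ and $T_i^j$ unchanged; or (b) $P_{i+1}^\ell$ is the union $\bigcup_{q=0}^{s} P_i^{j_q}$ of parts corresponding to a star in $\calG_i$ with center $\ver(P_i^{j_0})$ and leaves $\ver(P_i^{j_1}), \dots, \ver(P_i^{j_s})$ (where the center is the endpoint of the contracted edges that lies closer to the root of the shallow tree in $\calT_i$, determined by the forest decomposition step). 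For case (b) the plan is to set $r_{i+1}^\ell := r_i^{j_0}$ and to build $T_{i+1}^\ell$ by gluing together the inductive spanning trees $T_i^{j_0}, T_i^{j_1}, \dots, T_i^{j_s}$ via one representative $G$-edge per contracted $\calG_i$-edge: for each $q \geq 1$, the edge $(\ver(P_i^{j_q}), \ver(P_i^{j_0}))$ of $\calG_i$ corresponds to a nonempty set of $G$-edges with one endpoint in $P_i^{j_q}$ and the other in $P_i^{j_0}$, and we select a canonical one, say the edge with the lexicographically smallest pair of endpoint identifiers; call it $(u_q, v_q)$ with $u_q \in P_i^{j_q}$ and $v_q \in P_i^{j_0}$. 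Both endpoints of this edge can agree on this choice locally.

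To turn this glued graph into a rooted tree with the parent/child information each node needs, I will re-root each $T_i^{j_q}$ at $u_q$ for $q \geq 1$ and then attach $u_q$ as a child of $v_q$. Re-rooting $T_i^{j_q}$ at $u_q$ amounts to flipping the parent/child pointers along the unique tree path from $u_q$ to the old root $r_i^{j_q}$: $u_q$ initiates a single message that traverses this path, and each node on the path designates its former parent as a new child and its predecessor on the path as its new parent. This takes $O(\mathrm{diam}(T_i^{j_q}))$ rounds, which by Claim~\ref{clm:stage-I-correct2} is $O(4^i) = \poly(1/\eps)$. Finally, the new root $r_i^{j_0}$ broadcasts its identifier down the combined tree $T_{i+1}^\ell$, again in $O(4^{i+1})$ rounds, so that every node of $P_{i+1}^\ell$ learns $r_{i+1}^\ell$.

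The main obstacle is not the existence of the combined tree (which is immediate from the inductive trees plus the bridge edges) but rather the orchestration of the re-rooting and identifier-broadcast alongside the emulation of the merging step on $\calG_i$, to be described in Section~\ref{subsec:emulate-merge}. In particular, the center node $\ver(P_i^{j_0})$ of each star — which is realized in $G$ by the part $P_i^{j_0}$ with its own tree $T_i^{j_0}$ — must learn the identities of the bridge endpoints $\{v_q\}_{q=1}^s$ so that, upon receipt of the broadcast, each $v_q$ knows to pass the root identifier over the bridge edge to $u_q$ and hence into the re-rooted $T_i^{j_q}$. All such coordination messages have size $O(\log n)$ and flow along existing tree edges or single bridge edges, so they fit in the \congest\ model; the round complexity for phase $i$'s bookkeeping is dominated by the $O(4^i)$ diameter of the parts, consistent with the $\poly(1/\eps)$ per-phase budget.
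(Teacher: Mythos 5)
Your proof is correct and takes essentially the same approach as the paper's proof of Lemma~\ref{lem:span-tree} in Section~\ref{subsec:emulate-merge}: induction on the phase $i$, with each absorbed leaf-part's tree re-rooted at a designated bridge endpoint (by flipping parent pointers along the path to its old root), attached as a child of the corresponding node in the center part, and the new root's id disseminated to all nodes. The differences — lexicographic-pair tiebreaking rather than minimum-id-on-one-side, and broadcasting the root id from $r_i^{j_0}$ down the combined tree rather than having each old leaf root $r_i^{j_q}$ forward $r_i^{j_0}$'s id down its own $T_i^{j_q}$ (which is what the paper does, since $r_i^{j_q}$ already knows $r_i^{j_0}$ from the forest-decomposition emulation) — are cosmetic.
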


\ifnum\podc=1
Due to space constraints, we provide the emulation details only for the forest decomposition
step (in Section~\ref{subsec:emulate-forest}). Details for the emulation of the merging step can be found in
Appendix~\ref{subsec:emulate-merge}.
\fi

%

\subsubsection{Emulating the forest decomposition step}\label{subsec:emulate-forest}

\def\Active{\mbox{\rm `Active'}}

We assume that at the start of each phase $i$, the conditions in Lemma~\ref{lem:span-tree} hold.
We refer to each round in the forest decomposition algorithm described in Section~\ref{subsubsec:forest-Gi}
(on $\calG_i$) as a {\em super-round\/}.
%
In the forest decomposition step of each phase $i$, we have $s=O(\log n)$ super-rounds in which each active node $\ver(P_i^j)$ in $\calG_i$ needs to determine if it is still active in the next super-round and to send a corresponding message to its neighbors.
Each super-round is emulated by several rounds (on $G$), as described next.

The root of $P_i^j$, $r_i^j$, plays the role of $\ver(P_i^j)$ as follows.
For each super-round $\ell$, at the start of which $\ver(P_i^j)$ is active,
if $r_i^j$ determines in the course of this super-round that $\ver(P_i^j)$ should remain active in the next super-round ($\ell+1$), then it sends a message $(\Active, r_i^j)$ down the tree $T_i^j$.
Each boundary node in $P_i^j$ also sends this message to its neighbors outside of $P_i^j$.
The process by which $r_i^j$ determines in super-round $\ell$ whether $\ver(P_i^j)$ remains active or not is defined as follows
(where this process is also executed one super-round after $\ver(P_i^j)$ becomes inactive so that
$r_i^j$ can learn which neighbors of $\ver(P_i^j)$ also became inactive in super-round $\ell$).

At the beginning of each super-round $\ell$, each boundary node $u$ in $P_i^j$ that received
in the previous super-round
messages of the form $(\Active, r_i^{j'})$ for $j' \neq j$, 
does the following.
If $u$ received more than $3\alpha$ such messages with {\em distinct\/} root ids, then it sends a message \Active\ up the tree
(meaning that $\ver(P_i^j)$ should remain active since it has more than $3\alpha$ active neighbors).
Otherwise, for each $r_i^{j'}$ such that $u$ received a message $(\Active, r_i^{j'})$,
$u$ sends a message $(\Active,r_i^{j'}, x)$ to its parent, where $x$ indicates how many messages $(\Active, r_i^{j'})$ it received.
These messages go up the tree, where if a node $u$ receives the message \Active, then this is the single message it passes on. If $u$ did not receive \Active\, but it received more than $3\alpha$ messages $(\Active,r_i^{j'},x)$ with distinct root ids, then it also sends \Active\ up the tree. Otherwise, for each $r_i^{j'}$, let the
messages $u$ received with this root id be $(\Active,r_i^{j'},x_1),\dots,(\Active,r_i^{j'},x_q)$. Then $u$
sends its parent a message $\left(\Active,r_i^{j'},\sum_{p=1}^q x_p\right)$.
Finally, if $r_i^j$ received the message \Active\ or if it received more than $3\alpha$ messages
with distinct root ids, then
$r_i^j$ determines that $\ver(P_i^j)$ remains active in the  next super-round. Otherwise, it determines
that $\ver(P_i^j)$ becomes inactive. In the latter case, not only that the out-edges of $\ver(P_i^j)$ can be determined by $r_i^j$, their weights can be determined as well (this is the role of the third parameter in the messages going up the tree).

\smallskip
The total number of rounds (on $G$) sufficient for emulating a single super-round on $\calG_i$ is hence
upper bounded by the maximum diameter of parts in $\calP_i$, which by Claim~\ref{clm:stage-I-correct2}
is $\poly(1/\eps)$ (times $3\alpha$, which is a constant).

\smallskip
If after all $O(\log(n))$ super rounds there is some $r_i^j$ such that $\ver(P_i^j)$ is still active, then $r_i^j$ outputs {\sf reject} (implying that Stage~I did not complete successfully).

\def\EmulateMerge{
\ifnum\podc=1
\subsection{Emulating the merging step}\label{subsec:emulate-merge}
\else
\subsubsection{Emulating the merging step}\label{subsec:emulate-merge}
\fi
In this subsection we explain how to emulate all sub-steps in the merging step, and
establish Lemma~\ref{lem:span-tree} by induction on $i$. The base of the induction, $i=1$
is trivial, since in $\calP_1$ each node belongs to a singleton part, and in one round each node learns the identity of all its neighbors.

\paragraph{Determining the heaviest out-edge (Sub-step~\ref{substep:heaviest}).}
For a part $P_i^j$, consider the super-round $\ell$ in the forest decomposition algorithm in which $\ver(P_i^j)$ becomes inactive.
By the end of this super-round, $r_i^j$ obtained the ids of the roots, $r_i^{j'_1},\dots,r_i^{j'_q}$
where  $q \leq 3\alpha$, such that
$\ver(P_i^{j'_1}),\dots,\ver(P_i^{j'_q})$ are the active neighbors of $\ver(P_i^j)$ in $\calG_i$ at the start of
super-round $\ell$. Furthermore, for each of the corresponding parts $P_i^{j'_p}$,
the root $r_i^j$
obtained the number of edges between $P_i^j$ and $P_i^{j'_p}$, so that it knows the weight
of $(\ver(P_i^j),\ver(P_i^{j'_p}))$ in $\calG_i$.

 It remains to determine which of these edges is an outgoing edge of
$\ver(P_i^j)$. If $\ell$ is not the final round, then in the next round, $r_i^j$ learns which
of nodes $\ver(P_i^{j'_1}),\dots,\ver(P_i^{j'_q})$ remained active in round $\ell+1$. For each
such node $\ver(P_i^{j'_p})$, the edge $(\ver(P_i^j),\ver(P_i^{j'_p}))$ is an outgoing edge of $\ver(P_i^j)$,
and for each $\ver(P_i^{j'_p})$ that also became inactive in super-round $\ell$, the direction of the edge is determined by the ids of $r_i^j$ and $r_i^{j'_p}$. If $\ell$ is the last round, then either some
node in $\calG_i$ remained active, causing the corresponding root node in $G$ to reject, or all
nodes in $\calG_i$ became inactive, so that edge directions are determined by root ids.

For the sake of the following sub-steps, it will be convenient to designate, for each
selected outgoing edge $(\ver(P_i^j),\ver(P_i^{j'}))$, a single edge $(u,v)$ in $G$ such that
$u \in P_i^j$ and $v \in P_i^{j'}$. To this end, let $(\ver(P_i^j),\ver(P_i^{h(i,j)}))$ be the
heaviest outgoing edge of $\ver(P_i^j)$. The root, $r_i^j$ send a message with the id of
$r_i^{h(i,j)}$ down the tree. Each node $u\in P_i^j$ that has a neighbor in $ P_i^{h(i,j)})$
sends its id up the tree, where if a node receives more than one ``candidate'' node id from its children, then it sends the minimum id among them. In this manner, $r_i^j$ obtains the id of a single node $u_i^j \in P_i^j$
that has a neighbor, $v_i^j$, in $P_i^{h(i,j)}$, and it can notify $u_i^j$ that it ``in charge'' of
of the outgoing edge of $P_i^j$ (by sending an appropriate message down the tree).

\paragraph{Selecting (marking) shallow subtrees (Sub-step~\ref{substep:subtrees}).}
In order to emulate this sub-step it is first necessary to emulate the coloring algorithm of Cole and Vishkin~\cite{CV86}, and Goldberg, Plotkin and Shannon~\cite{GPS88}. The important observation is  that in this algorithm, whenever a node $\ver(P_i^j)$ in $\calF_i$ sends a message to its children in $\calF_i$ (where a message is always of size $O(\log n)$ bits), it sends the same message.
Hence, this can be emulated by simply sending (broadcasting) this message from $r_i^j$ down the tree $T_i^j$. Once the message reaches the boundary nodes of $P_i^j$ they also send it to their neighbors in $G$, and the message can go up the trees
of the parts corresponding to the children of $\ver(P_i^j)$. Sending a message from $\ver(P_i^j)$
to its parent in $\calF_i$ is similar (and even simpler, since there is a single parent).

In order to emulate the marking of edges, each root $r_i^j$ needs to gather information regarding the number of
edges between $P_i^j$ and parts $P_i^{j'}$ such that $j = h(i,j')$, for the different color classes. This information can be easily gathered by sending appropriate messages up the tree, and summing edge counts that correspond to the same color.

\paragraph{Deciding if to contract even or odd edges in each tree (Sub-step~\ref{substep:stars}).}
The emulation of this part is also simple. For each tree $T \in \calT_i$, first messages should be sent down $T$, so that each $\ver(P_i^j)$ can learn its level. Each such message from $\ver(P_i^j)$ to its children in $T$,
$\ver(P_i^{j_1}),\dots,\ver(P_i^{j_q})$ is emulated by sending a message down $T_i^j$ from $r_i^j$, and then
up the trees $T_i^{j_p}$. In a similar manner messages are sent up the tree $T$, summing up the weights of
even and odd edges, and then the bit `0' or `1' is sent down $T$.

\paragraph{Contracting edges (Sub-step~\ref{substep:star-contract}).}
Once a root $r_i^j$ corresponding to a node $\ver(P_i^j)$ learns that the edge
$(\ver(P_i^j),\ver(P_i^{h(i,j)}))$ should be contracted, it
sends a message down the tree $T_i^j$ notifying all nodes that $r_i^{h(i,j)}$ is their
new root. When this message reaches $u_i^j$ (the node in charge of the
edge $(\ver(P_i^j),\ver(P_i^{h(i,j)}))$), it makes $v_i^j$ (its neighbor in $P_i^{h(i,j)}$) its
parent, and sends a message up $T_i^j$ that each edge on the path to $r_i^j$ should flip its orientation.
The induction step for Lemma~\ref{lem:span-tree}, follows.

\paragraph{Emulation cost.}
The total number of rounds (on $G$) sufficient for emulating the merging step (from $\calG_i$ to
$\calG_{i+1}$) is hence upper bounded by $O(\log^*(n))$ (by Claim~\ref{clm:merge})
 times the maximum diameter of parts in $\calP_i$, which by Claim~\ref{clm:stage-I-correct2} is $\poly(1/\eps)$.
} 

\ifnum\podc=0
\EmulateMerge
\fi

 \subsection{A detailed description and analysis of Stage~II}\label{subsec:stage2}
  Assume Stage~I completes successfully. By Claim~\ref{clm:stage-I-correct1}, this always holds when $G$ is planar, and by the definition of planarity, each subgraph $G^j$ is planar. On the other hand,
  if $G$ is $\eps$-far from being planar, then by Claim~\ref{clm:stage-I-correct1}, at least one subgraph $G^{j*}$ is $(\eps/2)$-far from being planar. That is, if for each $j \in [k]$ we let $m(G^j)$ denote the number of edges in $G^j$, then the number of edges that need to be removed from $G^{j*}$ in order to make it planar is at least
  $(\eps/2)m(G^{j*})$.

 \subsubsection{Preliminary preprocessing rounds}
 Stage~II begins with several preliminary rounds of {\em basic information gathering}, where we build on
  Corollary~\ref{cor:stage-I-correct2} and Lemma~\ref{lem:span-tree}.
  Specifically, we use the fact  that each $G^j$ is connected, has diameter $\poly(1/\eps)$ and has a
  designated root node, $r^j$, that is known to all nodes in $G^j$.
\ifnum\podc=1
\begin{compactitem}
\else
\begin{itemize}
\fi
 \item In the first $\poly(1/\eps)$ rounds, for each $j\in [k]$, the nodes in each $G^j$ construct
 a BFS tree, rooted at $r^j$ and denoted $T_B^j$.
 \ifnum\podc=0
 This is done simply as follows.
 The root $r^j$ sends a message $(r^j,r^j,0)$ to all its neighbors (indicating that it belongs to level $0$
 in the BFS tree rooted at $r^j$). Once a node $u$ in $G^j$  receives a message $(r^j,v,s)$ from a neighbor $v$, $u$ notifies  $v$ that it is $v$'s child in the tree, and sends a message $(r^j,u,s+1)$ to all its neighbors (this is of course done only once, upon receiving this first such message).
 \fi
 When this process terminates, each node in $G^j$ knows which of its incident edges is incident to its parent in $T_B^j$, which edges are incident to its children in $T_B^j$, and which are edges in $G^j$ that do not belong to $T_B^j$ (which we refer to as {\em non-tree edges\/}).
 Each  edge in $G^j$ is assigned to its higher-level endpoint (breaking ties by ids in the case of edges with both endpoints in the same level).

\item  In the next $\poly(1/\eps)$ rounds, for each $j\in [k]$, the root $r^j$ obtains the number of nodes $n(G^j)$ in $G^j$ and the number of edges, $m(G^j)$. This is done simply by sending the corresponding information up the tree $T_B^j$.
 \ifnum\podc=0
    Namely, to obtain $n(G^j)$, each  node sends its parent the number of nodes in its subtree (once it obtains the number of nodes in the subtrees of its children). Similarly, to obtain $m(G^j)$, each node sends its parent the number of edges assigned to nodes in its subtree.
 \fi
 Once $n(G^j)$ and $m(G^j)$ are computed by $r^j$, it can broadcast them (down $T_B^j$) to all nodes in $G^j$ (in another  $\poly(1/\eps)$ rounds).
\ifnum\podc=1
\end{compactitem}
\else
\end{itemize}
\fi

 \subsubsection{Planarity testing within each $G^j$}
If $m(G^j) > 3n(G^j) - 6$, then $r^j$ rejects. Otherwise, the distributed planar embedding algorithm of Ghaffari and Haeupler~\cite{GH16} is executed on $G^j$. Recall that if $G^j$ is planar, then this algorithm computes for each node a circular 
ordering of its incident edges (known as a {\em combinatorial embedding}), such that there exists a planar (geometric) embedding of $G^j$ that is {\em consistent} with all edge orderings.
If some node $v$ does not obtain an ordering of its incident edges in $G^j$ (within the allotted number of rounds: $O(D(G^j) + \min(\log(n(G^j)),D(G^j))$ where $D(G^j)$ is the diameter of $G^j$)), then it rejects.


As noted in the introduction, it was communicated to us by one of the authors of~\cite{GH16} that their algorithm can be modified so as to detect if any $G^j$ is not planar~\cite{H-p-com}.
For the sake of a self-contained presentation, we rely on version of the algorithm as provided in~\cite{GH16} (which works under the promise that $G^j$ is planar). Therefore, it
remains to show how to verify (efficiently) that the ordering of edges incident to each node is indeed consistent with a planar embedding. To this end, we first introduce several notations and definitions.

Let $\tau = \{\tau_u\}_{u\in V}$ denote the ordering of edges in $G^j$ (as computed by the algorithm of Ghaffari and Haeupler~\cite{GH16}), which is with respect to the clockwise order of the embedding, and let $\tilde{\tau}$ be the same ordering of edges as $\tau$ but with respect to the counter-clockwise order.
In particular for each $u \in V$, and
for every three edges $(u,v_1),(u,v_2),(u,v_3)$, $\tilde{\tau}_u$ indicates whether $(u,v_2)$ is between $(u,v_1)$ and $(u,v_3)$ in counter-clockwise order.\footnote{Note that for any subgraph $H$ of $G^j$, the ordering $\tilde{\tau}$ defines a combinatorial embedding of $H$ as well.} 

Using the ordering $\tilde{\tau}$, together with the BFS tree $T_B^j$, each node in $G^j$ associates a label  with each of its incident edges (and in particular those incident to its children in $T_B^j$).
Specifically, $r^j$ arbitrarily labels one of its incident edges $e$ by `1', and the remaining edges are labeled
 consecutively according to their order with respect to $e$. For each other node $u$ in the tree, if the circular 
 counter-clockwise order of its incident edges is $(u,v_1)$, $(u,v_2)$, $\dots,$ $(u,v_k)$ where $v_1$ is its parent in the BFS tree,
 then $u$ labels  $(u,v_j)$ by $j$, and we denote this by $\ell(u,v_j)=j$ (indeed each edge has two labels, one from each of its endpoint, so that $\ell(v_j,u)$ may differ from $\ell(u,v_j)$).\footnote{The reason that we use the counter-clockwise order rather than the clockwise order is so that the edges in the tree will be in the standard, left-to-right order.}

 This labeling of edges is then used to induce a labeling on the nodes of $G^j$ in the natural manner: the label of a node $u$, denoted $\ell(u)$ is the concatenation of the labels of the edges on the path in $T_B^j$ from $r^j$ down to the node, where for each edge we use the label associated by the parent node. This labeling can be computed in $\poly(1/\eps)$ rounds, by distributing the label information down the tree, starting from $r^j$. This labeling of nodes defines a lexicographic order on the nodes.\footnote{That is,
 for two (different) nodes $u$ and $v$, let $\ell(u) = \sigma_1,\dots,\sigma_p$ and $\ell(v) = \sigma'_1,\dots,\sigma'_q$, where without loss of generality, $p \leq q$ (and for $u = r^j$ we have $p=0$).
  Let $i$ be the maximum index such that $\sigma_1,\dots,\sigma_i = \sigma'_1,\dots,\sigma'_i$, where if no such index exists, then $i = 0$. If $i=p$, then $\ell(u) < \ell(v)$. Otherwise, $\ell(u) < \ell(v)$ if $\sigma_{i+1}<\sigma'_{i+1}$, and $\ell(u) > \ell(v)$ otherwise.}

 We next introduce several notions regarding cycles and violating edges.
 See Figure~\ref{fig:def7def8} for an illustration of the notions introduced in
 Definitions~\ref{def:cycle-inside-out} and~\ref{def:inside-out}.

\begin{figure*}
    \centering
    \begin{subfigure}[b]{0.45\textwidth}
        \centering
        \includegraphics[width=\textwidth]{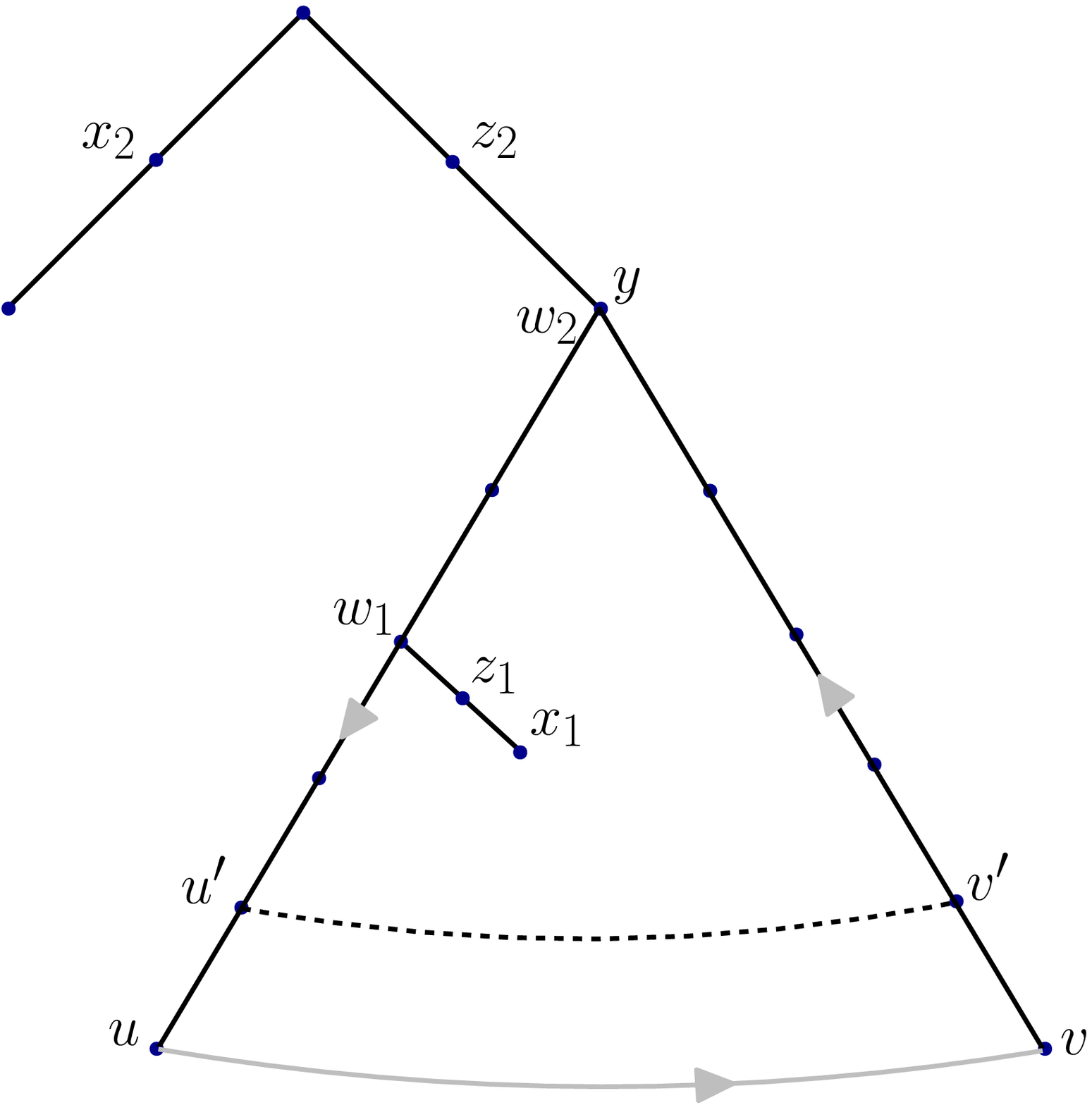}
        \caption{}
        \label{fig:def7def8}
    \end{subfigure}
    \quad\quad\quad
    \begin{subfigure}[b]{0.45\textwidth}
        \centering
        \includegraphics[width=\textwidth]{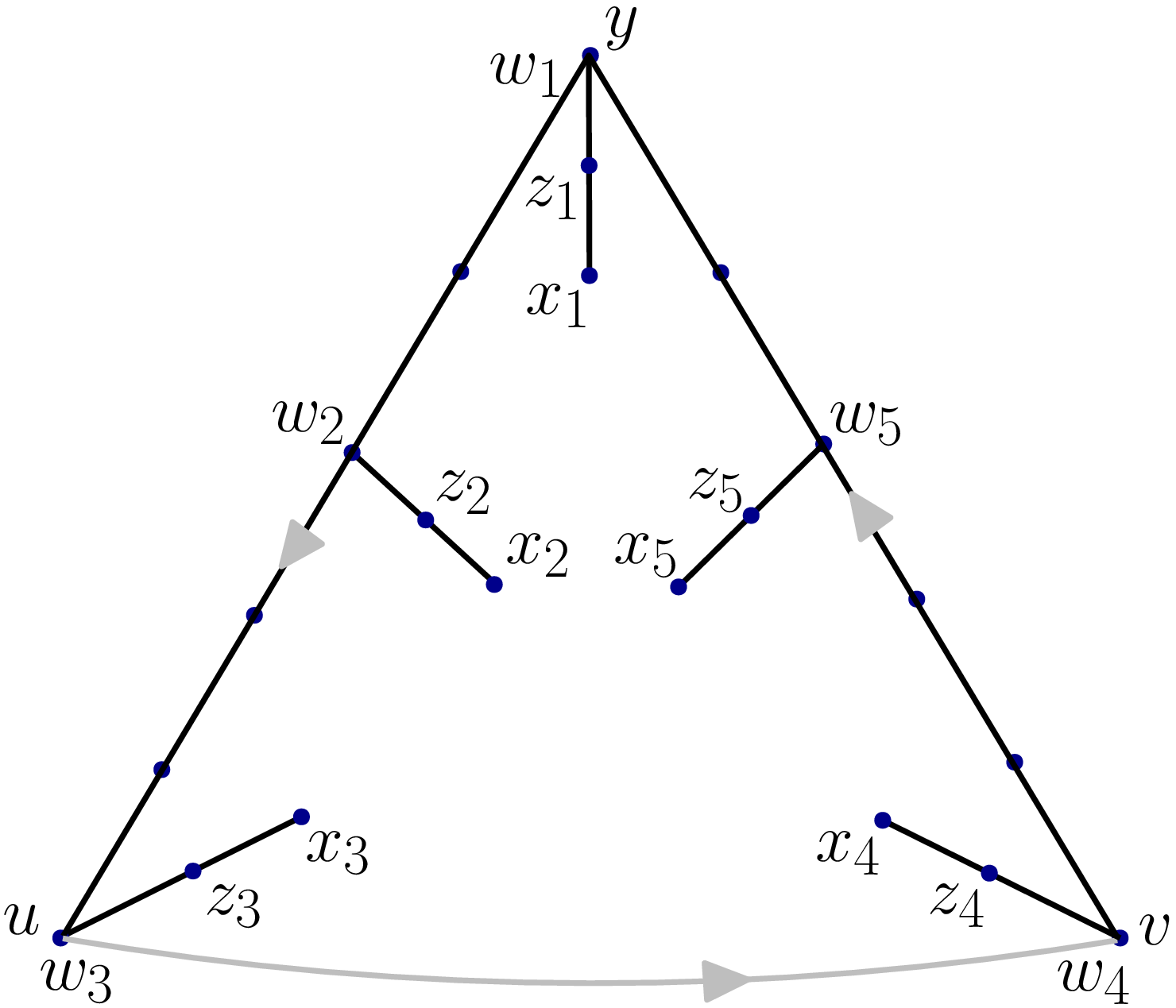}
        \caption{}
        \label{fig:incases}
    \end{subfigure}
    \newline
    \begin{subfigure}[b]{0.45\textwidth}
        \centering
        \includegraphics[width=\textwidth]{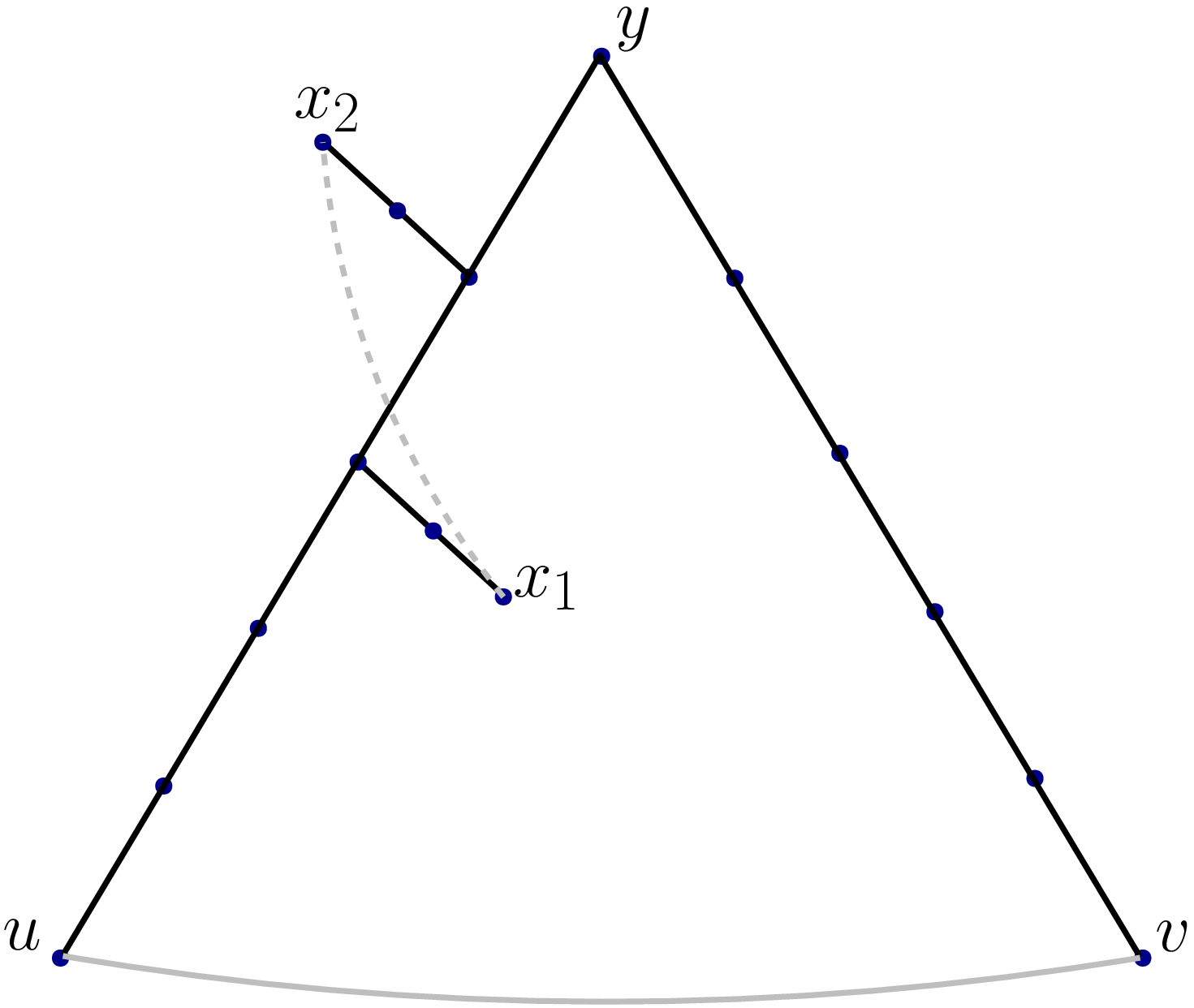}
        \caption{}
        \label{fig:violate}
    \end{subfigure}
    \quad\quad\quad
    \begin{subfigure}[b]{0.45\textwidth}
        \centering
        \includegraphics[width=\textwidth]{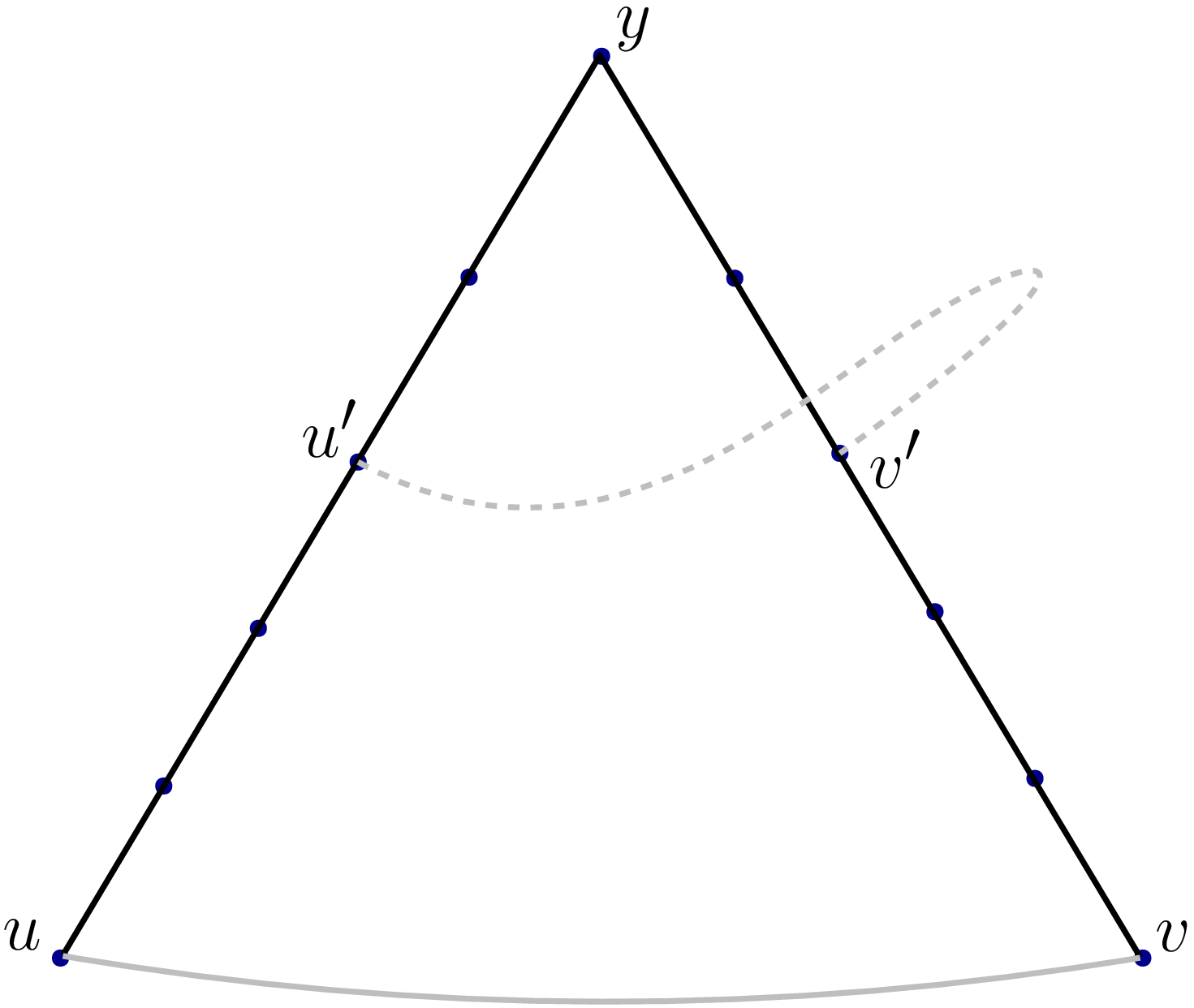}
        \caption{}
        \label{fig:violate2}
    \end{subfigure}
    \caption{In all subfigures, the BFS tree edges are depicted as black edges. The non-tree edge $(u,v)$ is depicted by a gray edge - forming the simple cycle $C(u,v)$. The vertices of $C(u,v)$ are ordered as depicted by the red arrows, i.e., $(y,\ldots,w_1,\ldots, u,v,\ldots, y)$.
    \\
    Two  cases are considered in Sub-figure~\ref{fig:def7def8}: (1)~$x_1$ is {\sf inside}, and (2)~$x_2$ is {\sf outside}. Moreover, the edge $(u',v')$ is also {\sf inside $C(u,v)$ w.r.t. $u'$ and $v'$}. The shortest paths from $C(u,v)$ to these $x$'s are along the BFS tree.
    \\
    In Sub-figure~\ref{fig:incases} we give examples of nodes $x_i$ that are {\sf inside} $C(u,v)$: (1)~$x_1$ satisfies: $\ell(u) < \ell(x_1)<\ell(v)$ and  $x_1$ is not a descendant of $u$ (similarly for $x_2$ and $x_5$); (2)~$x_3$ satisfies: (a) $\ell(u) < \ell(x_3)<\ell(v)$, (b) it is a descendant of $u$ and (c) $\ell(u,z_3)>\ell(u,v)$;  (3)~$x_4$ is a descendant of $v$ and $\ell(v,z_4)<\ell(v,u)$.
    \\
    In Sub-figures~\ref{fig:violate} and~\ref{fig:violate2} we give examples of violating edges: (1)~in Sub-figure~\ref{fig:violate}, $x_1 \in I(u,v)$ while $x_2 \in O(u,v)$, hence the edge $(x_1,x_2)$ is a violating edge. In Sub-figure~\ref{fig:violate2} the edge $(u',v')$ is {\sf inside $C(u,v)$ w.r.t. $u'$} but is {\sf outside $C(u,v)$ w.r.t. $v'$}, hence it is a violating edge.}
    \label{fig:cycle_vio}
\end{figure*}

  \begin{definition}\label{def:cycle-inside-out}
Let $H$ be a subgraph of $G^j$, and let $C$ be a cycle in $H$, where we consider a fixed ordering
of the vertices on the cycle: $C = (x_1,\dots,x_k,x_1)$.
 For a vertex $x_i$ on the cycle and an edge $(x_i,y)$, we say that  $(x_i,y)$ is
 {\sf inside $C$ with respect to $x_i$ (and $\tilde{\tau}$)}, if $(x_i,y)$ is between $(x_i,x_{i+1})$ and
$(x_i,x_{i-1})$ in the (counter-clockwise) ordering of edges incident to $x_i$ as defined by $\tilde{\tau}$ (where for $i=1$, $x_{i-1}=x_k$ and similarly for $i=k$, $x_{i+1}=x_1$).
Otherwise, $(x_i,y)$ is  {\sf outside $C$} (with respect to $x_i$ and $\tilde{\tau}$).
\end{definition}

\begin{definition}\label{def:inside-out}
  Let $(u,v)$ be a  non-tree edge in $G^j$ (with respect to the BFS tree $T_B^j$) where $\ell(u) < \ell(v)$.
  Let $y$ be the least common ancestor of $u$ and $v$ in $T_B^j$, and let $C(u,v)$ be the simple cycle consisting of
  (in this order), the path from $y$ to $u$ in $T_B^j$, the edge $(u,v)$,  and the path from $v$ to $y$ in $T_B^j$.

   Let $x$ be a node that does not belong to $C(u,v)$, and consider the node on $C$, $w$, for which the
length of the path from $u$ to $w$ in $T_B^j$ in minimized. 
   We say that $x$  is {\sf inside $C(u,v)$} if the first edge $(w,z)$ on this path
   is inside $C(u,v)$ with respect to  $w$ and $\tilde{\tau}$ (as defined in Definition~\ref{def:cycle-inside-out}). Otherwise $x$ is {\sf outside $C(u,v)$}.
\end{definition}

Observe that by Definition~\ref{def:inside-out} and the definition of the labeling $\ell$,
 a node $x$ is inside $C(u,v)$ if and only if one of the following conditions holds (see Figure~\ref{fig:incases}).
  \begin{itemize}
  \item $\ell(u) < \ell(x) < \ell(v)$ and  $x$ is not a descendant of $u$;
  \item $\ell(u) < \ell(x) < \ell(v)$, $x$ is a descendant of $u$ and
  $\ell(u,z) > \ell(u,v)$ where $z$ is the ancestor of $x$ that is a child of $u$.
   \item $x$ is a descendant of $v$ and $\ell(v,z)< \ell(v,u)$ where $z$ is the ancestor of $x$ that is a child of $v$.
  \end{itemize}

Using Definitions~\ref{def:cycle-inside-out} and~\ref{def:inside-out} we define violations between edges.
\begin{definition}\label{def:violate-edges}
  Let $(u,v)$  and $(u',v')$ be two non-tree edges in $G^j$ (with respect to the BFS tree $T_B^j$).
   We say that $(u',v')$ {\sf is in violation} with respect to $(u,v)$ if one of the following conditions hold.
     \begin{itemize}
     \item $u'$ and $v'$ are both on $C(u,v)$, and the edge $(u',v')$ is on different sides of $C(u,v)$ with respect to its two endpoints (see Figure~\ref{fig:violate2}).
  \item  $u'$ is inside
  $C(u,v)$ and $v'$ is outside of $C(u,v)$ (see Figure~\ref{fig:violate}). 
  \item
  $u'$ is inside (outside) of $C(u,v)$, 
  $v'$ is on $C(u,v)$, and the edge $(u',v')$ is
  outside (respectively, inside) $C(u,v)$. (An illustration for this case is very similar to that shown in Figure~\ref{fig:violate}, and is hence omitted.)
   \end{itemize}
   We say that  $(u,v)$ is a {\sf violating\/} edge, if there exists at least one non-tree edge $(u',v')$
 that is violating with respect to $(u,v)$.
\end{definition}

    \begin{ourclaim}\label{clm:planar-non-intersect}
  If $G^j$ is planar, then there are no violating edges in $G^j$.
  \end{ourclaim}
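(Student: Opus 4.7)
My plan is to prove the contrapositive in geometric form: assuming $G^j$ is planar, fix a planar (geometric) embedding $\phi$ of $G^j$ on the plane (or sphere) that is consistent with the combinatorial embedding $\tilde{\tau}$ produced by the algorithm of Ghaffari and Haeupler. Such a $\phi$ exists because, for a planar graph, $\tilde{\tau}$ is by construction realizable. Now fix any non-tree edge $(u,v)$ and consider the simple cycle $C=C(u,v)$. By the Jordan curve theorem applied to $\phi$, the drawing of $C$ partitions the plane into two open regions, an interior $\mathrm{Int}(C)$ and an exterior $\mathrm{Ext}(C)$, and every edge of $G^j$ not in $C$ lies entirely in one of these regions (since edges do not cross in $\phi$).

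The first key step is to match the combinatorial notions with the geometric ones. For a vertex $w$ on $C$, the two cycle-neighbors $w_{\mathrm{prev}}, w_{\mathrm{next}}$ split the cyclic order of edges incident to $w$ (given by $\tilde{\tau}_w$) into two arcs; consistency of $\tilde{\tau}$ with $\phi$ implies that one arc consists of edges leaving $w$ into $\mathrm{Int}(C)$ and the other of edges leaving $w$ into $\mathrm{Ext}(C)$. This shows Definition~\ref{def:cycle-inside-out} agrees with geometry: an incident edge $(w,z)$ is inside $C$ w.r.t.\ $w$ iff it enters $\mathrm{Int}(C)$ at $w$. For Definition~\ref{def:inside-out}, take any $x \notin C$ and let $w$ be the cycle node closest to $x$ in $T_B^j$, with $(w,z)$ the first edge on the tree-path from $w$ to $x$. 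By the choice of $w$, neither $z$ nor any subsequent node on this path lies on $C$, so the tree-path from $w$ to $x$ is a continuous curve in $\phi$ disjoint from $C\setminus\{w\}$, hence stays entirely in the single region of $\mathrm{Int}(C)$ or $\mathrm{Ext}(C)$ determined by the side into which $(w,z)$ leaves $w$. Consequently $x$ is inside $C$ in the sense of Definition~\ref{def:inside-out} iff $x \in \mathrm{Int}(C)$ geometrically; this is also the place where I will verify that the three bullets listed after Definition~\ref{def:inside-out} correctly reformulate the definition using the labeling $\ell$ (a short case analysis on whether $w$ lies on the $y$-to-$u$ path, on the $y$-to-$v$ path, or equals $y$).

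Given this correspondence, I rule out each of the three cases in Definition~\ref{def:violate-edges} for a putative violating pair $(u',v')$ with respect to $(u,v)$. In the first case, $u',v' \in C$ and the chord $(u',v')$ is on different sides of $C$ at its two endpoints; but in $\phi$ the chord is a connected curve disjoint from $C$ except at its endpoints, so it lies in a single region of $C$, and by the correspondence of Definition~\ref{def:cycle-inside-out} this forces the same side at both $u'$ and $v'$, a contradiction. In the second case, $u' \in \mathrm{Int}(C)$ and $v' \in \mathrm{Ext}(C)$, so the edge $(u',v')$ would be a continuous curve crossing $C$, impossible in a planar embedding. In the third case, $u'$ lies strictly in one region and $v' \in C$, but the edge $(u',v')$ is on the opposite side of $C$ at $v'$; since the edge is a continuous curve from $u'$ to $v'$ disjoint from $C$ except at $v'$, its germ at $v'$ must be in the same region as $u'$, again contradicting the correspondence from Definition~\ref{def:cycle-inside-out}. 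The main obstacle in the proof is Step~2, namely carefully justifying that the labeling-based reformulation of Definition~\ref{def:inside-out} faithfully tracks the first tree edge leaving $C$ toward $x$ in every case; once this is in place, the remaining cases are short geometric arguments.
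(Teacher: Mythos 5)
Your proposal is correct and follows essentially the same route as the paper's proof: fix a planar embedding consistent with $\tilde{\tau}$, invoke the Jordan curve theorem on $C(u,v)$, and rule out each of the three cases of Definition~\ref{def:violate-edges}. Your write-up is somewhat more explicit than the paper's (the paper asserts rather than argues that the combinatorial inside/outside of Definitions~\ref{def:cycle-inside-out} and~\ref{def:inside-out} coincides with the geometric sides of $C(u,v)$), and note that the verification of the three labeling-based bullets following Definition~\ref{def:inside-out} is not actually required for this claim, so that part of your plan can be dropped.
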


  \begin{proof}
  Consider a planar embedding of $G^j$ that is consistent with the ordering $\tilde{\tau}$.
  Recall that $\tilde{\tau}$ is the output of the algorithm of
  Ghaffari and Haeupler~\cite{GH16}, and that by the correctness of the algorithm, such an embedding exists.
   For each non-tree edge $(u,v)$ 
   consider the cycle $C(u,v)$ as defined in Definition~\ref{def:inside-out}.
   Let $u'$ and $v'$ be two nodes in $G^j$.
   If $u'$ and $v'$ are both on $C(u,v)$ and there is an edge between them, then either $(u',v')$ is inside
   $C(u,v)$ both with respect to $u'$ and with respect to $v'$ or it is outside $C(u,v)$ with respect to both.
   Otherwise there is no planar embedding of $C(u,v) + (u',v')$ that is consistent with $\tilde{\tau}$.
   Similarly, if neither $u'$ nor $v'$ belongs to $C(u,v)$,
   suppose that $u'$ is defined to be
  {\em inside\/} $C(u,v)$ according to Definition~\ref{def:inside-out} and $v'$ is defined to be {\em outside\/}, or vice versa.
   Since the planar embedding is consistent with $\tilde{\tau}$, the nodes $u'$ and $v'$
  must be embedded on different sides of $C(u,v)$, and hence there cannot be an edge between $u'$ and $v'$.
  Similar arguments hold for the case that one of the two nodes is on $C(u,v)$ and the other is inside/outside
  $C(u,v)$.
  \end{proof}

\medskip
Recall that $T_B^j$ is a BFS tree defined over $G^j$ and that violations are defined with respect to a labeling $\ell$ that is 
induced by $\tilde{\tau}$. For a subgraph $H$ of $G^j$, we say that $H$ is {\em connected by $T_B^j$} if
for every two nodes $u$ and $v$ in $H$, there is a path in $H$ consisting only of edges belonging to $T_B^j$.
We shall use the following notation in our proof that if $G^j$ does not contain any violating edges, then it is planar
(Claim~\ref{clm:no-violating-planar}).
\begin{definition}\label{def:in-out}  
Let $H$ be a subgraph of $G^j$  that is connected by $T_B^j$.
For any  edge $(u,v)$ in $H$ that does not belong to the tree $T_B^j$ where $\ell(u) < \ell(v)$, let $I_H(u,v)$ and $O_H(u,v)$  denote the subset of nodes in $H$ that are inside and outside $C(u,v)$, respectively (according to Definition~\ref{def:inside-out}).
\begin{itemize}
\item Let $E_H^I(u,v)$ denote the set of edges that are either incident to $I_H(u,v)$ or edges for which both endpoints are on $C(u,v)$ and are inside $C(u,v)$ with respect to both endpoints.
\item Similarly, let $E_H^O(u,v)$ denote the set of edges that are either incident to $O_H(u,v)$ or edges for which both endpoints are on $C(u,v)$ and are outside $C(u,v)$ with respect to both endpoints.
\item Let $S_H^I(u,v)$ denote the subgraph consisting of $C(u,v)$ and $E_I(u,v)$.
\item Similarly, let $S_H^O(u,v)$ denote the subgraph consisting of $C(u,v)$ and $E_O(u,v)$.
\end{itemize}
\end{definition}

We will also use the following lemma. In order to prove the lemma we apply a slight variant of Lemma~7.2 and Corollary~7.1 in~\cite{Even}.
Since the details are very similar to those appearing in~\cite{Even}, they are deferred to the appendix.

\begin{lemma}\label{cor:cycle}
Let $C(u, v)$ be a cycle in $H$ as defined in Definition~\ref{def:inside-out}. If both $S_H^I(u,v)$ and $S_H^O(u,v)$ (see Definition~\ref{def:in-out}) are planar and have a planar embedding that is consistent with $\tilde{\tau}$, then $H$ is planar and has a planar embedding that is consistent with $\tilde{\tau}$ as well.
\end{lemma}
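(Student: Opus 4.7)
The plan is to construct a planar embedding of $H$ by gluing together planar embeddings of $S_H^I(u,v)$ and $S_H^O(u,v)$ along the cycle $C(u,v)$, and then verify that the combined embedding is consistent with $\tilde{\tau}$ at every vertex.

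First I would fix a planar embedding of $S_H^I(u,v)$ consistent with $\tilde{\tau}$, and observe that by the definition of $E_H^I(u,v)$, every edge of $S_H^I(u,v)$ that is not on $C(u,v)$ is inside $C(u,v)$ (in the sense of Definition~\ref{def:cycle-inside-out}) with respect to $\tilde{\tau}$ at any of its endpoints that lies on the cycle. Since the embedding is consistent with $\tilde{\tau}$, the Jordan curve traced by $C(u,v)$ has all such non-cycle edges emanating into a single one of its two sides at each cycle vertex; by the Jordan curve theorem and connectedness of the non-cycle portion of $S_H^I(u,v)$ (via $T_B^j$), the whole subgraph $S_H^I(u,v)\setminus C(u,v)$ lies in one face bounded by $C(u,v)$. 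Reflecting the embedding if necessary, we may assume this face is the bounded (interior) face of $C(u,v)$. I would perform the symmetric operation for $S_H^O(u,v)$, drawing its non-cycle portion in the unbounded (exterior) face of $C(u,v)$. Here I would appeal to the variant of Even's Lemma~7.2 / Corollary~7.1 alluded to in the statement to carry out the ``interior/exterior'' rearrangement formally.

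Now I would glue the two embeddings along their common copy of $C(u,v)$ to obtain a planar drawing of $H$. Planarity is immediate because interior and exterior drawings live in disjoint open regions separated by the Jordan curve $C(u,v)$ and agree on $C(u,v)$ itself. It remains to check consistency with $\tilde{\tau}$ on a vertex-by-vertex basis. For vertices $x\notin C(u,v)$, all of the edges of $H$ incident to $x$ lie entirely in one of $S_H^I(u,v)$ or $S_H^O(u,v)$ (since the absence of violating edges, together with Definition~\ref{def:in-out}, means $x$ has no incident edge crossing the inside/outside divide), so the cyclic order at $x$ is inherited from the chosen embedding of that subgraph and is thus consistent with $\tilde{\tau}$ by hypothesis.

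The main obstacle will be verifying consistency at vertices on $C(u,v)$, where the incident edges of $H$ split into the two cycle edges, a group of ``inside'' incident edges, and a group of ``outside'' incident edges. At such a vertex $w$, the cyclic $\tilde{\tau}$-order around $w$ reads: one cycle edge, then the inside edges in some $\tilde{\tau}$-order, then the other cycle edge, then the outside edges in some $\tilde{\tau}$-order. The embedding of $S_H^I(u,v)$ (consistent with $\tilde{\tau}$) places the inside edges in exactly that $\tilde{\tau}$-order in the interior arc between the two cycle edges at $w$, and the embedding of $S_H^O(u,v)$ places the outside edges in exactly that $\tilde{\tau}$-order in the exterior arc; concatenating the two arcs yields the full cyclic $\tilde{\tau}$-order around $w$, which is what is required. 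The absence of violating edges (Definition~\ref{def:violate-edges}) is precisely what ensures that no edge tries to cross the cycle at a vertex on $C(u,v)$ in an inconsistent manner, so the case analysis of Definition~\ref{def:violate-edges} plugs directly into the verification here. This concludes the construction of a planar embedding of $H$ consistent with $\tilde{\tau}$.
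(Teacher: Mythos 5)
Your proposal is correct and takes essentially the same approach as the paper: both arguments glue the planar embeddings of $S_H^I(u,v)$ and $S_H^O(u,v)$ along the common cycle $C(u,v)$ (relying on the variant of Even's Lemma~7.2 / Corollary~7.1 for the interior/exterior rearrangement), and both verify $\tilde{\tau}$-consistency by noting that off-cycle vertices inherit their cyclic order from a single subgraph while at vertices on $C(u,v)$ the two orders can be concatenated since they share only the two cycle edges, which sit at the boundary of each arc.
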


We are now ready to prove Claim~\ref{clm:no-violating-planar}, stated next.

  \begin{ourclaim}
  \label{clm:no-violating-planar}
  If $G^j$ does not contain any violating edges, then it is planar.
  \end{ourclaim}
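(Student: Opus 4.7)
Plan: I would prove the claim by induction on the number of edges of $G^j$, with the base case being when $G^j$ equals the BFS tree $T_B^j$ (no non-tree edges), in which case $G^j$ is trivially planar with any combinatorial embedding consistent with $\tilde{\tau}$.

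For the inductive step, assume $G^j$ contains a non-tree edge, and pick one, call it $(u,v)$, chosen if possible so that both the interior and the exterior of $C(u,v)$ contain at least one vertex of $G^j$ (if no such choice is possible, then every non-tree edge is an ``outer'' edge of $G^j$, a degenerate situation I will handle separately by observing that in that case $G^j$ is essentially a tree with parallel ear attachments and is directly seen to be planar). Using the non-violation hypothesis, I would first argue that every edge of $G^j$ not lying on $C(u,v)$ belongs entirely to either $S_{G^j}^I(u,v)$ or $S_{G^j}^O(u,v)$, so that $G^j = S_{G^j}^I(u,v) \cup S_{G^j}^O(u,v)$ with the two pieces sharing exactly the cycle $C(u,v)$.

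Next, I would verify that both $S_{G^j}^I(u,v)$ and $S_{G^j}^O(u,v)$ are themselves connected by $T_B^j$; this amounts to the observation that for every node $x$ inside (respectively outside) $C(u,v)$, the tree path from $x$ to its nearest point on $C(u,v)$ stays inside (respectively outside), which follows from the characterization of $I_{G^j}(u,v)$ and $O_{G^j}(u,v)$ noted right after Definition~\ref{def:inside-out}. Consequently Definition~\ref{def:in-out} applies to each piece, and any pair of non-tree edges violating each other inside $S_{G^j}^I(u,v)$ would also violate each other in $G^j$, so both subgraphs inherit the non-violation property. Since both subgraphs have strictly fewer edges than $G^j$, the inductive hypothesis yields planar embeddings consistent with $\tilde{\tau}$ for each; Lemma~\ref{cor:cycle} then stitches these into a planar embedding of $G^j$ consistent with $\tilde{\tau}$.

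The main obstacle will be the combinatorial bookkeeping: verifying rigorously that $T_B^j$-connectedness and the non-violation property are preserved under the decomposition at $C(u,v)$, and dealing cleanly with the degenerate case where one side of the cycle contains no interior/exterior vertices. Once these verifications are in place, the induction closes and planarity of $G^j$ follows immediately from Lemma~\ref{cor:cycle}.
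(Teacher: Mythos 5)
Your overall strategy matches the paper's: induct, pick a non-tree edge $(u,v)$, split $G^j$ along $C(u,v)$ into $S^I$ and $S^O$, verify both pieces are $T_B^j$-connected and inherit non-violation, apply the inductive hypothesis to each, and stitch with Lemma~\ref{cor:cycle}. Two points, however, mark a genuine gap.

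First, you divide on whether the \emph{interior and exterior each contain a vertex}, whereas the paper divides on whether $E^I_H(u,v)$ and $E^O_H(u,v)$ are both non-empty, i.e., on \emph{edges}. These are not the same: $E^I_H(u,v)$ can be non-empty because of a chord with both endpoints on $C(u,v)$ that sits inside with respect to both, even when no vertex lies strictly inside. With the vertex-based criterion, such a configuration falls into your ``degenerate'' branch, where it is not actually degenerate; with the paper's edge-based criterion it is handled by the main inductive split.

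Second, and more seriously, your degenerate branch is not argued. You assert that if no non-tree edge has vertices on both sides, then $G^j$ is ``essentially a tree with parallel ear attachments and is directly seen to be planar.'' That characterization is not established and does not obviously follow (consider, e.g., a path with two overlapping chords forming an outerplanar ladder: no cycle has a vertex strictly on one side, yet the cycles interlock and the graph is not a tree plus disjoint ears). Moreover, even granting planarity, you would still have to show the existence of an embedding consistent with $\tilde{\tau}$, which again requires invoking the non-violation hypothesis. The paper avoids this entirely: in its second case (some non-tree $(u,v)$ with $E^I_H(u,v) = \emptyset$) it deletes $(u,v)$, applies the inductive hypothesis to the smaller graph $H'$ (still $T_B^j$-connected since $(u,v)$ is a non-tree edge), and observes that $(u,v)$ can be re-inserted into the $\tilde{\tau}$-consistent embedding of $H'$ precisely because the inside of $C(u,v)$ is empty. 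It then shows the only remaining possibility (all non-tree edges have empty $E^O$) forces a single non-tree edge, which is a base case. Replacing your hand-wave with this remove-and-reinsert argument would close the gap; as written, the degenerate branch is the missing piece.
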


\begin{proof}
We show that for any subgraph $H$ of $G^j$ that is connected by  $T_B^j$, if there are no violating edges in $H$,
then there exists a planar embedding of $H$ that is consistent with $\tilde{\tau}$.


We prove this claim by induction on the pair $(t,h)$, where $t$ is the number
of non-tree edges in $H$, and $h$ is the total number of edges.
The base cases are $t=0$ and $t = 1$  (for any $h$), for which it is easy to see that $H$ is always planar.
(Observe that if $h \leq 3$, then $t \leq 1$.)

For the induction step, consider a subgraph $H$ (connected by the edges of $T_B^j$) with $t\geq 2$ non-tree edges and $h$ edges.
Note that for any non-tree edge $(u, v)$,  $H$ is the union of $S_H^I(u,v)$ and $S_H^O(u,v)$, since there are no edges between nodes in $I_H(u,v)$ and $O_H(u,v)$.
We consider two cases.
The first case is that there exists a non-tree edge, $(u, v)$, such that both $E_I(u,v)$ and $E_O(u,v)$ are non-empty.
Observe that both $S_H^I(u,v)$ and $S_H^O(u,v)$ are connected by edges of $T_B^j$. Since there are no violating edges in $H$, there are also no violating edges in each of these two subgraphs.
We can therefore apply the induction hypothesis (since the number of edges in each subgraph is strictly smaller than in $H$), and infer that both $S_H^I(u,v)$ and $S_H^O(u,v)$ have planar embeddings, and furthermore, that each of these embeddings is consistent with $\tilde{\tau}$.
By Lemma~\ref{cor:cycle}, the claim follows.

The second case is that there exists a non-tree edge, $(u, v)$, such that $E_I(u,v)$ is empty.
In this case we remove $(u, v)$ and consider a planar embedding that is consistent with $\tilde{\tau}$ of the resulting graph $H'$. By the induction hypothesis such an embedding of $H'$ exists. Now, we claim that it is possible to
add $(u, v)$ 
to this embedding and obtain a planar embedding of $H$ that is consistent with $\tilde{\tau}$.
To verify this, observe that the inside of $C(u, v)$ with respect to $\tilde{\tau}$ is empty and therefore it is possible to add the edge $(u, v)$, in a manner that is consistent with $\tilde{\tau}$, without crossing any edges of $H'$.

If both cases do not occur, then it is implied that either there are no cycles (namely, $t=0$) or that for all non-tree edge, $(u, v)$, $E_O(u,v)$ is empty. The latter implies that there is a single non-tree edge in the graph (i.e., $t=1$). This completes the proof of the claim.
\end{proof}

\medskip
  As a corollary of Claim~\ref{clm:no-violating-planar} we get:
  \begin{corollary}
  If $G^j$ is $\gamma$-far from 
  planarity, then there exist at least $\gamma\cdot m(G^j)$ violating edges in   $G^j$.
  \end{corollary}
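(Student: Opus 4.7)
The plan is to prove the contrapositive: assuming that $G^j$ contains strictly fewer than $\gamma\cdot m(G^j)$ violating edges, I will exhibit a set of that many edges whose removal yields a planar graph, contradicting the hypothesis that $G^j$ is $\gamma$-far from planarity.

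First, I would let $G'^j$ denote the subgraph of $G^j$ obtained by deleting every violating edge. By Definition~\ref{def:violate-edges}, only non-tree edges can be violating, so the BFS tree $T_B^j$ remains intact in $G'^j$; in particular $G'^j$ is connected by $T_B^j$ and inherits the tree $T_B^j$ together with the cyclic orderings $\tilde{\tau}_u$ restricted to the surviving incident edges at each node $u$. The natural subsequent step is to show that $G'^j$ itself contains no violating edge, and then to invoke Claim~\ref{clm:no-violating-planar} (more precisely, the inductive statement actually established in its proof, which applies to any subgraph $H$ of $G^j$ that is connected by $T_B^j$) to conclude that $G'^j$ is planar.

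The pivotal observation is that the ``inside/outside'' relation of Definitions~\ref{def:cycle-inside-out} and~\ref{def:inside-out}, and hence the notion of violation of Definition~\ref{def:violate-edges}, is stable under the deletion of non-tree edges. Indeed, Definition~\ref{def:cycle-inside-out} refers only to the cyclic position of an edge $(x_i,y)$ relative to the two cycle-adjacent edges $(x_i,x_{i-1})$ and $(x_i,x_{i+1})$, and between-ness of three elements in a cyclic order is preserved when the order is restricted to any subset that contains those three elements. Definition~\ref{def:inside-out} then reduces the relation for an off-cycle node $x$ to the analogous between-ness test at the attachment point $w\in C(u,v)$, invoking only $T_B^j$ (unchanged) and $\tilde{\tau}$ (restricted). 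Thus, for any two non-tree edges $(u,v),(u',v')\in G'^j$, whether $(u',v')$ is in violation with respect to $(u,v)$ is the same in $G^j$ and in $G'^j$. If some $(u,v)\in G'^j$ were violating in $G'^j$ with witness $(u',v')\in G'^j$, the same pair would witness violation in $G^j$, contradicting the fact that we have already removed all violating edges of $G^j$. Hence $G'^j$ has no violating edge.

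Applying (the inductive form of) Claim~\ref{clm:no-violating-planar} to $G'^j$ then yields planarity of $G'^j$, so $G^j$ can be made planar by removing strictly fewer than $\gamma\cdot m(G^j)$ edges, contradicting $\gamma$-farness. The only delicate point in the plan is the stability argument in the previous paragraph; I would phrase that step carefully to emphasize that the inside/outside predicates are defined geometrically via $T_B^j$ and $\tilde{\tau}$ rather than via the absolute labels $\ell$ (which can shift when incident non-tree edges are deleted), so that only between-ness relations among surviving edges are invoked.
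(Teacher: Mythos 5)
Your proposal is correct and is exactly the argument the paper leaves implicit (the paper calls this an immediate corollary of Claim~\ref{clm:no-violating-planar} and gives no proof). The one idea that must be supplied---that the ``in violation'' relation of Definition~\ref{def:violate-edges} between a pair of non-tree edges $(u,v)$, $(u',v')$ depends only on $T_B^j$ and on the cyclic order $\tilde{\tau}$ restricted to the relevant edges, and hence removing other non-tree edges can only destroy witnesses, never create them---is precisely the stability observation you single out, and your appeal to the inductive form of Claim~\ref{clm:no-violating-planar} (applied to the subgraph $G'^j$, which is connected by $T_B^j$ since only non-tree edges are deleted) is the right way to close the argument.
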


\medskip
  Given Claims~\ref{clm:no-violating-planar} and~\ref{clm:planar-non-intersect}, the algorithm proceeds as follows. First $r^j$ broadcasts the labels of 
  $s = \Theta(\log n/\eps)$
  non-tree edges of $G^j$ that are selected uniformly, independently, at random. Such a selection can be performed in $\log n \cdot \poly(1/\eps)$ rounds.
 In particular, each node can decide independently for each of the non-tree edges assigned to it whether it is selected (by flipping a coin with bias $\Theta((\log n/\eps)/\tilde{m}^j)$ for each of these edges, where
 $\tilde{m}^j$ is the total number of non-tree edges in $G^j$). The selected edges (i.e., pairs of node labels) are sent up the tree, where if the number of selected edges is significantly larger than the expected number, then the algorithm fails (this happens with probability $1/\poly(n)$).
Once  $r^j$ obtaines such a sample of non-tree edges, it broadcasts the labels of these edges to all nodes in the tree. Each node in the tree can now check whether any of the non-tree edges assigned to it
is in violation with
any one of the sampled edges, and reject based on such
a violation.

\ifnum\podc=0
\medskip
We have thus completed establishing Theorem~\ref{thm:test-plan}.  
\fi

\def\wG{{\widetilde{G}}}
\newcommand{\lbproof}{
\section{A lower bound}\label{sec:lb}
Recall that for a fixed graph $H$,
$H$ is a {\em minor\/} of a graph $G$ if $H$ is isomorphic to a graph that can be obtained
by zero or more edge contractions on a subgraph of $G$.
We say that a graph $G$ is {\em $H$-minor free\/} (or {\em excludes $H$
as a minor\/}) if $H$ is not a minor of $G$.
For a family ${\mathcal H}$ of (constant-size) graphs, we say that a graph $G$ is
{\em ${\mathcal H}$-minor free\/} if it is $H$-minor free for every $H \in {\mathcal H}$.
In particular, planar graphs are $\{K_{3,3},K_5\}$-minor free.

In this section we 
establish the following theorem, which extends a result of
Censor-Hillel et al.~\cite[Theorem 7.3]{CFSV16}
for $K_3$-minor freeness (cycle-freeness).\footnote{To be precise,
the graphs in the lower-bound construction of Censor-Hillel have a constant degree, while the
graphs in our lower bound construction do not necessarily have a constant degree.
However, we can easily modify
the construction so that the graphs have a constant degree,
in the same manner as
in~\cite[Theorem 7.3]{CFSV16}.}
\begin{theorem}\label{thm:lb}
Let ${\mathcal H}$ be a fixed family of constant-size graphs where at least one $H \in {\mathcal H}$ contains a cycle.
Any distributed one-sided error  algorithm for testing ${\mathcal H}$-minor freeness
must run in $\Omega(\log n)$ rounds (for constant $\eps$).
\end{theorem}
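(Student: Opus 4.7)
The plan is to extend the random $d$-regular graph construction of Censor-Hillel et~al.~\cite[Theorem~7.3]{CFSV16}, which handles the special case $\mathcal{H}=\{K_3\}$ (cycle-freeness), so that it covers any $\mathcal{H}$ containing a cyclic member. The core idea is to exhibit a graph $G_n$ that is $\eps$-far from $\mathcal{H}$-minor freeness yet whose local views of radius $o(\log n)$ are tree-like, and then invoke the standard one-sided-error indistinguishability argument.

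First, fix a cyclic $H^\star\in\mathcal{H}$ promised by the hypothesis and let $t=|V(H^\star)|$. I would take $G_n$ to be a random $d$-regular graph on $n$ vertices, where $d$ is a constant chosen as follows. Let $c=c(t)$ be an extremal constant (Kostochka--Thomason) such that any $n$-vertex graph with more than $cn$ edges contains $K_t$, and hence $H^\star$, as a minor. Pick $d$ with $(1-\eps)d/2>c$. Then, with high probability, (i)~$G_n$ has girth $g=\Omega(\log_{d-1}n)=\Omega(\log n)$ by standard results on random regular graphs, and (ii)~after removal of any $\eps m=\eps dn/2$ edges, the remaining graph still has more than $cn$ edges, so it still contains $H^\star$ as a minor. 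Property~(ii) means $G_n$ is $\eps$-far from $H^\star$-minor freeness; since $\mathcal{H}$-minor freeness is at least as restrictive a property, $G_n$ is also $\eps$-far from $\mathcal{H}$-minor freeness.

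Second, suppose for contradiction that a one-sided-error tester $A$ runs in $r<g/2$ rounds. Then for every $v\in V(G_n)$ the labeled $r$-neighborhood $N_r(v)$ is a labeled $d$-regular tree of depth $r$. Since $H^\star$ has a cycle, no tree contains $H^\star$ as a minor, and (assuming every $H\in\mathcal{H}$ is cyclic) no tree contains any $H\in\mathcal{H}$ as a minor. I would build an ``alibi'' graph $G_v^\star$ on $n$ vertices that agrees with $G_n$ on $N_r(v)$ and extends $N_r(v)$ by a forest on fresh vertex labels; $G_v^\star$ is a forest and hence $\mathcal{H}$-minor free. By one-sided error, $v$ outputs \textsf{reject} with probability $0$ in an execution of $A$ on $G_v^\star$. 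But the output of $v$ after $r$ rounds is determined by $N_r(v)$ and the local random bits inside it, which have identical distribution in $G_v^\star$ and in $G_n$; hence $v$ outputs \textsf{reject} with probability $0$ in $G_n$ as well. Union-bounding over $v$, $A$ accepts $G_n$ with probability $1$, contradicting its correctness on the $\eps$-far input $G_n$. Hence $r\geq g/2=\Omega(\log n)$.

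The main obstacle is the case in which $\mathcal{H}$ contains forest members in addition to the cyclic $H^\star$: then the depth-$r$ $d$-regular tree $N_r(v)$ itself contains every small forest as a minor (for $r$ large enough), so a tree-based alibi no longer avoids all $\mathcal{H}$-minors. The fix is to replace the random regular base graph by a suitably long subdivision of it (say, each edge replaced by a path of length $L=\Theta(\log n)$), which thins the local views into long paths while preserving, after rescaling $\eps$, the ``$\eps$-far from $H^\star$-minor freeness'' property; the indistinguishability argument then applies with the sparser alibi. Apart from this delicate choice of the base graph to suppress the forest minors in the local view, the proof is a direct adaptation of the construction in~\cite{CFSV16}.
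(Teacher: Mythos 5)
Your high-level route is the same as the paper's: exhibit a bounded-degree graph that is $\eps$-far from $\mathcal H$-minor freeness yet has girth $\Omega(\log n)$, then observe that one-sided error forces all nodes to accept when their $o(\log n)$-radius views are trees. The main deviation is in how you certify farness: you invoke the extremal density bound (Kostochka--Thomason) to say that any graph retaining $> c(t)n$ edges after deletions contains $K_t$ (hence $H^\star$) as a minor. That is a clean replacement for the paper's argument, which instead samples $\wG\sim\calG(n,\Theta(1/n))$ and proves ``well-connectedness'' (every two linear-size vertex sets have many edges between them) to force a $K_k$-minor after any small edge deletion.

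However, there is a genuine gap in your girth claim. Random $d$-regular graphs for constant $d\geq 3$ do \emph{not} have girth $\Omega(\log n)$ with high probability; the number of $\ell$-cycles converges to a Poisson random variable with constant mean $(d-1)^\ell/(2\ell)$, so with probability bounded away from zero the graph contains a triangle, and with probability $1-o(1)$ its girth is $O(1)$. What is true is that the \emph{number} of cycles of length at most $\Theta(\log n)$ is $o(m)$ in expectation, and the paper (following \cite{CFSV16}) exploits exactly this: it removes one edge from each short cycle, then argues (Claim~\ref{clm:cycles}) that with probability $\geq 1/2-2^{-\Omega(n)}$ the resulting graph is still $\Omega(1)$-far from $K_k$-minor freeness. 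Your proof omits this short-cycle surgery and the accompanying accounting step, which is load-bearing: without it the indistinguishability argument does not apply to a graph that is actually far.

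Your concern about $\mathcal H$ containing a forest member $H'$ is a sharp and legitimate observation: if a local view $N_r(v)$ already contains $H'$ as a minor, then \emph{no} graph agreeing with $G_n$ on $N_r(v)$ can be $\mathcal H$-minor free, so the alibi construction breaks down for $v$, and indeed the theorem as literally stated admits pathologies (e.g.\ $\mathcal H=\{K_{1,1},K_3\}$, where $\mathcal H$-minor freeness is just edgelessness and is testable in a single round). The paper's proof glosses over this: the sentence ``all nodes must accept'' tacitly assumes a tree-shaped local view is consistent with some $\mathcal H$-minor-free graph, which only holds when every member of $\mathcal H$ has a cycle. Your proposed subdivision fix, though, does not repair this. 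Replacing each edge by a path of length $L=\Theta(\log n)$ blows the edge count up to $\Theta(nL)$ while the distance to $K_t$-minor freeness stays $\Theta(n)$ (one cut per thick edge suffices), so the relative distance drops to $\Theta(1/\log n)$ and the conclusion is no longer ``for constant $\eps$.'' Moreover, the depth-$r$ views around the original (non-subdivision) vertices are spiders with a degree-$d$ center, which still contain $K_{1,d}$ as a minor, so the forest obstruction is not eliminated. The correct remedy is to add the hypothesis that every $H\in\mathcal H$ contains a cycle (equivalently, $\mathcal H$-minor freeness is not implied by forest-minor freeness), under which both your proof and the paper's go through.
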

Our proof of Theorem~\ref{thm:lb} is very similar to the proof of Theorem 7.3 in~\cite{CFSV16}. We also build on a lower bound proof of Czumaj et al.~\cite{CGRSSS14} for one-sided error testing of minor-freeness.
Similarly to~\cite{CFSV16}, we use the probabilistic method to establish that for any constant $k$ and any number of nodes $n$,
there exist graphs $G$ over $n$ nodes for which the following hold: (1) $G$ is $\eps$-far
from being $K_k$-minor free for $\eps = \eps(k)$; (2) $G$ contains no cycles of length $\log(n)/c$ for a sufficiently large constant $c = c(k)$.
Theorem~\ref{thm:lb} directly follows by setting $k$ to be the minimum size of $H \in {\mathcal H}$ that contains a cycle and observing that for any one-sided error algorithm that runs in less than $\log(n)/c$  rounds, when executed on $G$, all nodes must accept.

In order to construct such graphs, we first select a graph $\wG$ distributed according to
$\calG(n,p)$ for $p=\Theta(1/n)$, and prove that it is far from being
 $K_k$-free with high probability. We then show that by removing a relatively small number of edges, the resulting graph, $G$, has no short cycles, and remains far from being $K_k$-free.
 \ifnum\cready=1
The  next claim is similar to~\cite[Claim 6.2]{CGRSSS14} and its proof can be found in the full version of this paper.
 \fi

\begin{ourclaim}\label{clm:far}
Let $\wG$ be a graph selected according to $\calG(n,p)$ for $p= 1000 k^2 /n$. With probability
$1- 2^{\Omega(n)}$, the graph
$\wG$ has at most $2000k^2 n$ edges and is $\eps$-far from $K_k$-minor freeness for $\eps = 1/(50 k^2)$.
\end{ourclaim}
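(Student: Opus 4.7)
The plan is to prove both parts of the claim via concentration of $|E(\wG)|$, which is a sum of $\binom{n}{2}$ independent $\textrm{Bernoulli}(p)$ variables with $\Ex[|E(\wG)|] = p\binom{n}{2} \leq 500 k^2 n$. First I would apply the multiplicative Chernoff upper tail with a constant relative deviation (e.g., with multiplicative factor $4$) to obtain $|E(\wG)| \leq 2000 k^2 n$ except with probability $2^{-\Omega(n)}$, which gives the first half of the claim.

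For the far-from-minor-free part, I would first use the symmetric Chernoff lower tail to conclude that $|E(\wG)| \geq 400 k^2 n$ except with probability $2^{-\Omega(n)}$, and union-bound this event with the upper-tail event. Then I would argue by contradiction: if $\wG$ were $\eps$-close to $K_k$-minor freeness, one could remove a set $S$ of at most $\eps\,|E(\wG)|$ edges so that $\wG\setminus S$ is $K_k$-minor free. Invoking the classical extremal bound (due to Kostochka and Thomason) that every $K_k$-minor-free graph on $n$ vertices has at most $f(k)\cdot n$ edges with $f(k) = O(k\sqrt{\log k})$, one obtains $(1-\eps)\,|E(\wG)| \leq |E(\wG\setminus S)| \leq f(k)\cdot n$. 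Plugging in $\eps = 1/(50 k^2)$ together with $|E(\wG)| \geq 400 k^2 n$ yields $(400k^2-8)\,n \leq f(k)\cdot n$, which is a contradiction for every fixed $k\geq 3$: for small $k$ one verifies directly that $f(3)=1$ (forests), $f(4)\leq 2$ (series-parallel), $f(5)\leq 3$ (planar), and for larger $k$ the ratio $k^2/f(k)$ tends to infinity.

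The main obstacle, really just a matter of bookkeeping, is to verify that the constants fit together: the Bernoulli parameter $p = 1000k^2/n$ was picked precisely so that the typical edge count is of order $k^2\cdot n$, which dominates the linear-in-$n$ extremal bound of the $K_k$-minor-free family even after losing an $\eps$-fraction of the edges. The approach is the same as in~\cite[Claim 6.2]{CGRSSS14}, adapted to the present parameters; beyond Chernoff concentration and the extremal sparsity of $K_k$-minor-free graphs, no new ingredients are required.
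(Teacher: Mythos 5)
Your proof is correct and takes a genuinely different route from the paper's. The paper follows~\cite[Claim 6.2]{CGRSSS14}: after the Chernoff upper bound on $|E(\wG)|$, it establishes a combinatorial ``well-connectedness'' property (with probability $1-2^{-\Omega(n)}$, every pair of disjoint vertex sets of size $\geq n/3k$ is joined by more than $\eps m$ edges, shown by a union bound over $3^n$ pairs), and then exhibits a robust $K_k$-minor structure: partition the vertices into $k$ equal parts, use well-connectedness to argue each part contains a connected component of size $\geq n/3k$, and observe that every pair of such components is joined by more than $\eps m$ edges — so after deleting any $\leq\eps m$ edges these components still yield a $K_k$-minor model. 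Your argument replaces this structural analysis with a pure density contradiction: a Chernoff lower bound gives $|E(\wG)|=\Omega(k^2 n)$, while the Kostochka--Thomason extremal theorem caps the edge count of any $K_k$-minor-free graph on $n$ vertices at $O(k\sqrt{\log k})\,n$, so no $\eps$-fraction of deletions (with $\eps=1/(50k^2)$) can bring the edge count below the threshold. Your route is shorter, outsources the combinatorics to a classical quotable extremal bound, and, as a side benefit, explicitly records a lower bound on $|E(\wG)|$ that the paper's statement does not — even though such a lower bound is implicitly used in the subsequent Claim~\ref{clm:cycles} to ensure that the $O(n)$ edges deleted to destroy short cycles do not exhaust the $\eps m$ deletion budget. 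The paper's approach, in exchange, produces an explicit minor model that is robust to arbitrary small deletions, which can be a strictly more informative conclusion than a mere density count.
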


\ifnum\cready=0
\begin{proof}
Since the expected number of edges in $\wG$, denoted $m(G)$, is $p \cdot {n\choose 2}<p n^2 = 1000 k^2 n$, by a multiplicative Chernoff bound, the probability that $m(G) > 2000 k^2 n$ is at most $e^{-pn^2/3} = 2^{-\Omega(n)}$.
From this point on we condition on the event that $m(G) \leq 2000 k^2 n$.

We  say that $\wG$ is {\em well connected\/} if
for every two disjoint subsets $C_1$ and $C_2$ of nodes such that
$|C_1|,|C_2| \geq n/3k$, the number of edges with one endpoint in $C_1$ and the other in $C_2$
is greater than $\eps m$.
We next establish the following subclaim: With probability $1- 2^{\Omega(n)}$,
the graph  $\wG$ is well connected.
For any two subsets $C_1$ and $C_2$ of nodes such that
$|C_1|,|C_2| \geq n/3k$, the expected number of edges between them
is at least $p (n/3k)^2 \geq 100 n$. Once again by a multiplicative Chernoff bound,
the probability that there are less than $50 n$ edges between the two sets is
at most $e^{-5n}$. The number of such pairs of subsets is upper bounded by $3^n$, and so the
probability that for some such pair of subsets there are less than $50 n$ edges between them, is
upper bounded by $2^{-\Omega(n)}$. Setting $\eps = 1/(50 k^2)$, the subclaim follows.
From this point we also condition on the event that $\wG$ is well connected.

The remainder of the argument follows~\cite[Proof of Claim 6.2]{CGRSSS14}.
Consider an arbitrary partition of the nodes in $\wG$ into $k$ equal size subsets, $U_1,\dots,U_k$,
and let $\wG_i$ be the subgraph induced by $U_i$. We claim that each $\wG_i$ contains a connected component of size at least $n/3k$. To verify this, let $W_i^1,\dots,W_i^t$ be the connected components of $\wG_i$.
Assume, contrary to the claim, that each connected component contains less than $n/3k$ nodes.
But this means that there exists a subset of indices $J \subset [t]$ such that both $W_i = \bigcup_{j\in J}W_i^j$ and
$W'_i = \bigcup_{j\in [t]\setminus J}W_i^j$ contain at least $n/3k$ nodes each. But since $\wG$ is well connected, there must be
an edge between some node in $W_i$ and some node in $W'_i$, and we get a contradiction. We thus have, for each part $U_i$, a connected component, $W_i^{j(i)}$ of size at least $n/3k$.
Using once again the assumption that $\wG$ is well connected, we get that
for each pair $(W_i^{j(i)},W_{i'}^{j(i')})$, there are more than $\eps m$ edges between
 $W_i^{j(i)}$ and  $W_{i'}^{j(i')}$.
This implies that $\wG$
is $\eps$-far from being $K_k$-minor free.
\end{proof}
\fi

The next claim  uses the same argument as in~\cite[Lemma 7.7]{CFSV16}.
\begin{ourclaim}\label{clm:cycles}
Let $\wG$ be as defined in Claim~\ref{clm:far} and let $G$ be a graph resulting from $\wG$ by removing
a single edge from each cycle in $\wG$ whose length is less than $\log(n)/c(k)$,
where $c(k) = \Theta(\log k)$.
With probability at least $1/2 - 2^{-\Omega(n)}$, the graph $G$ is $\eps$-far from $K_k$-minor
freeness for $\eps = 1/(100 k^2)$.
\end{ourclaim}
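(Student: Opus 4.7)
The plan is to combine Claim~\ref{clm:far} with a standard Markov argument on the number of short cycles in $\calG(n,p)$. Conditioning on the good events of Claim~\ref{clm:far}, I need to show that very few edges are removed when going from $\wG$ to $G$, so that $G$ is still $\eps$-far (up to a factor of $2$) from $K_k$-minor freeness.

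First, I would upper bound the expected number of short cycles in $\wG \sim \calG(n,p)$ with $p = 1000 k^2/n$. The standard counting estimate gives
\[
\Ex[\#\{\ell\text{-cycles in }\wG\}] \;\le\; \binom{n}{\ell}\cdot\frac{(\ell-1)!}{2}\cdot p^{\ell} \;\le\; \frac{(np)^{\ell}}{2\ell} \;=\; \frac{(1000 k^2)^{\ell}}{2\ell}.
\]
Summing over $3 \le \ell < L := \log(n)/c(k)$ yields an expectation of at most $L\cdot(1000 k^2)^L = (\log n /c(k)) \cdot n^{\log(1000 k^2)/c(k)}$. I would then pick $c(k) := 3\log(1000 k^2) = \Theta(\log k)$, so that the exponent of $n$ is at most $1/3$ and the total expected number of short cycles is $O((\log n) \cdot n^{1/3}) = o(n)$. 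By Markov's inequality, with probability at least $3/4$ the number $r$ of short cycles in $\wG$ is $o(n)$, so at most $r = o(n)$ edges are removed in forming $G$ from $\wG$.

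Finally, I would combine the two conclusions via a union bound. With probability at least $1/2 - 2^{-\Omega(n)}$ the following hold simultaneously: (i) $m(\wG) \le 2000 k^2 n$ and $m(\wG) \ge \Omega(k^2 n)$ (the lower bound by an additional Chernoff estimate on a binomial with mean $p\binom{n}{2}$); (ii) $\wG$ is $1/(50 k^2)$-far from $K_k$-minor freeness; and (iii) $G$ is obtained from $\wG$ by deleting $r = o(n)$ edges. Because $K_k$-minor freeness is monotone under edge deletion, any set of edges whose removal from $G$ yields a $K_k$-minor free graph, together with the $r$ edges already removed, destroys all $K_k$-minors in $\wG$. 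Hence the minimum number of edges one must delete from $G$ is at least $(1/(50k^2))\,m(\wG) - r \ge (1/(100k^2))\,m(\wG) \ge (1/(100k^2))\,m(G)$, provided $n$ is large enough that $r \le (1/(100 k^2))\,m(\wG)$, which holds since $m(\wG) = \Omega(k^2 n)$ while $r = o(n)$.

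The main obstacle will be calibrating $c(k)$: it must be large enough (as a function of $k$) that $(1000 k^2)^L$ is sublinear in $n$, yet only $\Theta(\log k)$ so that the resulting girth bound $\log(n)/c(k)$ still gives an $\Omega(\log n)$ round lower bound in Theorem~\ref{thm:lb}. The other delicate point is the bookkeeping between $m(\wG)$, $m(G)$, and the two farness parameters $1/(50 k^2)$ and $1/(100 k^2)$; the slack factor of $2$ in the farness parameter is precisely what absorbs the $o(n)$ edges removed to destroy short cycles.
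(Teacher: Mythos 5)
Your proposal is correct and follows essentially the same route as the paper: bound the expected number of cycles shorter than $\ell = \log(n)/c(k)$ by roughly $(1000k^2)^{\ell}$, apply Markov to bound the number of removed edges with probability at least $1/2$ (or $3/4$), and union-bound with the $2^{-\Omega(n)}$ failure events of Claim~\ref{clm:far}, absorbing the removed edges into the factor-of-$2$ slack between $1/(50k^2)$ and $1/(100k^2)$. Your choice $c(k) = 3\log(1000k^2)$ is a bit more conservative than the paper's $c(k) = \log(1000k^2)$ (giving $r = o(n)$ rather than $r = O(n)$), and you are more explicit about needing a lower bound $m(\wG) = \Omega(k^2 n)$ to make the bookkeeping go through, but these are cosmetic differences within the same argument.
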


\begin{proof}
Let $S$ be a fixed set of $\ell$ nodes. The probability (over the choice of $G$) that there is a cycle over
$S$ is at most $\ell! \cdot p^\ell$. Therefore,
the expected number of cycles of length at most $\ell$ is upper bounded by
${n\choose \ell} \cdot \ell!\cdot p^\ell < (1000 k^2)^\ell$, and with probability at least $1/2$
it is at most twice this number.
If we set $\ell = \log(n)/\log(1000 k^2)$
then
by Claim~\ref{clm:far} and a union bound over all ``bad'' events, Claim~\ref{clm:cycles} follows.
\end{proof}
}
\ifnum\podc=0
\lbproof
\fi

\section{A partitioning algorithm for minor-free graphs and applications}\label{sec:minor-free}
In what follows, when we use the term ``a distributed partitioning algorithm'', we mean an algorithm that gives the following guarantee. Upon completion, there is a partition
 $\calP = (P^1,\dots,P^k)$ of the nodes such that for each $j \in [k]$, the subgraph induced by $P^j$ is connected, and there is a designated node $r^j \in P^j$ such that all nodes in $P^j$ know the id of $r^j$.
We first note that Stage I of our testing algorithm (described in Section~\ref{subsec:stage1})
implies the next theorem.

\sloppy
\begin{theorem}\label{thm:determ}
There exists a deterministic distributed partitioning algorithm in the \congest\ model for which the following holds.
For an edge-cut parameter $\eps\in (0,1)$, the algorithm runs in $O(\poly(1/\eps)\log n)$ rounds, the diameter of each part is $\poly(1/\eps)$, and if $G$ is minor-free,
then the total number of edges between parts is at most $\eps n$.
\end{theorem}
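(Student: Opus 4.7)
The plan is to simply invoke Stage~I of the planarity testing algorithm described in Section~\ref{subsec:stage1}, and observe that when $G$ is minor-free all the analysis goes through deterministically. First, recall that for any fixed minor $H$, there is a constant $\alpha = \alpha(H)$ such that every $H$-minor-free graph has arboricity at most $\alpha$, and moreover every minor of an $H$-minor-free graph is $H$-minor-free. Hence if we run Stage~I with the arboricity bound set to $\alpha$ (rather than $3$ as for planarity), every auxiliary graph $\calG_i$ that arises by contractions from $G$ has arboricity at most $\alpha$, so by the guarantee of the Barenboim--Elkin forest decomposition procedure, the forest decomposition step of each phase terminates successfully with no remaining active node. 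In particular, Stage~I completes successfully in the sense of Definition~\ref{def:stage1-success}, and no randomness is required anywhere in the algorithm (the coloring step uses the deterministic Cole--Vishkin / Goldberg--Plotkin--Shannon algorithm, and all other sub-steps are deterministic).

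Next, the bound on the inter-part edge weight follows from Claim~\ref{clm:merge} applied iteratively: after $t$ phases we have
\[
w(\calG_{t+1}) \;\leq\; \left(1-\tfrac{1}{12\alpha}\right)^{t}\cdot w(\calG_1) \;=\; \left(1-\tfrac{1}{12\alpha}\right)^{t}\cdot m.
\]
Since $H$-minor-free graphs satisfy $m = O(n)$ (with the hidden constant depending only on $H$), choosing $t = \Theta(\alpha\log(1/\eps)) = O(\log(1/\eps))$ ensures $w(\calG_{t+1}) \leq \eps n$, which is exactly the number of edges of $G$ that cross between parts of the final partition $\calP = \calP_{t+1}$. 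The diameter bound on each part is a direct consequence of Claim~\ref{clm:stage-I-correct2}, giving diameter at most $4^{t} = \poly(1/\eps)$, and the existence of a designated root known to all nodes in each part together with the spanning tree within each part follows from Lemma~\ref{lem:span-tree}.

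Finally, the round complexity is obtained by summing the costs of the $t = O(\log(1/\eps))$ phases. Each phase consists of the forest decomposition step (which emulates $O(\log n)$ super-rounds on $\calG_i$, each at a multiplicative cost equal to the maximum part-diameter $\poly(1/\eps)$ for routing messages up and down the spanning trees) plus the merging step (emulated in $O(\log^* n)\cdot \poly(1/\eps)$ rounds by Claim~\ref{clm:merge}). The dominant term is thus $O(\poly(1/\eps) \log n)$ per phase, and multiplying by $t = O(\log(1/\eps))$ keeps the total at $O(\poly(1/\eps)\log n)$. Since everything is deterministic, the claim follows.

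The only potentially delicate point is to make sure that nothing in Stage~I secretly relied on $G$ being planar specifically, rather than just minor-free with a constant arboricity bound; but the only place planarity entered the analysis in Section~\ref{subsec:stage1} was in bounding the arboricity by $3$ in Claim~\ref{clm:stage-I-correct1}, and this is precisely what we replace by the constant $\alpha(H)$ here, so no additional work is needed.
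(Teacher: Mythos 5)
Your proposal is correct and matches the paper's (implicit) argument: the paper simply asserts that Theorem~\ref{thm:determ} is implied by Stage~I of the planarity tester, and you have correctly fleshed out the details — replacing the arboricity bound $3$ with the constant $\alpha(H)$, noting that minors of $H$-minor-free graphs are $H$-minor-free so every $\calG_i$ inherits the arboricity bound, observing that $m=O(n)$ for minor-free graphs to convert the $\eps m$-type bound into $\eps n$, and confirming that every sub-step of Stage~I is deterministic. No gaps.
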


We show that by modifying the algorithm referred to in Theorem~\ref{thm:determ}, we obtain a
tradeoff between the round complexity and the success probability, as stated next.
\ifnum\podc=1
The proof appears in Appendix~\ref{a:parti2}.
\fi
\begin{theorem}\label{thm:second}
There exists a distributed partitioning algorithm in the \congest\ model for which the following holds.
For an edge-cut parameter $\eps\in (0,1)$ and a confidence parameter $\delta\in (0,1)$, the algorithm runs in $O(\poly(1/\eps)(\log(1/\delta)+\log^* n))$ rounds, the diameter of each part is $\poly(1/\eps)$, and if $G$ is minor-free,
then with probability at least $1-\delta$, the total number of edges between parts is at most $\eps n$.
\end{theorem}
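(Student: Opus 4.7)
The plan is to obtain Theorem~\ref{thm:second} by modifying Stage~I of the planarity testing algorithm (which, as noted, already yields Theorem~\ref{thm:determ}). The $\log n$ dependence in Theorem~\ref{thm:determ} comes from a single place: the Barenboim--Elkin forest decomposition step within each of the $t = O(\log(1/\eps))$ phases, which requires $\Theta(\log n)$ super-rounds before the last active node is peeled off. Every other ingredient of a phase, the Cole--Vishkin coloring of $\calF_i$, the constant-depth tree marking, and the contraction, already runs in $O(\log^* n)$ rounds on $\calG_i$. So the task reduces to replacing the forest decomposition by a procedure that, under the promise of minor-freeness (and hence bounded arboricity $\alpha$), produces in each $\calG_i$ an orientation of outdegree $O(\alpha)$ in $O(\log^* n + \log(1/\delta'))$ rounds with success probability $1-\delta'$.

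To build such a replacement, I would use a randomized peeling procedure on $\calG_i$: in each super-round every still-active node independently declares itself ``peelable'' with probability $1/2$, and a peelable node whose current active degree is at most $c\alpha$ (for a sufficiently large constant $c$) becomes inactive, orients its remaining incident edges outward (breaking ties with peelable neighbors by id, as in Section~\ref{subsubsec:forest-Gi}), and informs its neighbors. Under the arboricity-$\alpha$ promise, at any moment a constant fraction of the active nodes have active degree at most $c\alpha$ by a standard Nash--Williams averaging argument, so each super-round shrinks the active set by a constant factor in expectation. A Chernoff bound then shows that after $O(\log(1/\delta'))$ super-rounds the active set is empty with probability $1-\delta'$. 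Each super-round is emulated on $G$ in $\poly(1/\eps)$ rounds via the spanning trees of Lemma~\ref{lem:span-tree}, exactly as in Section~\ref{subsec:emulate-forest}.

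Setting $\delta' = \delta/t$ and union-bounding over the $t = O(\log(1/\eps))$ phases, the total round complexity becomes $\poly(1/\eps)\cdot(\log^* n + \log(t/\delta)) = \poly(1/\eps)(\log^* n + \log(1/\delta))$. Conditional on the peeling succeeding in every phase, the weight-reduction argument of Claim~\ref{clm:merge} applies verbatim, so $w(\calG_{t+1}) \le (1-1/(12\alpha))^t m$, which is at most $\eps n$ once $t = \Theta(\alpha\log(1/\eps))$ (absorbing $\alpha$ into $\poly(1/\eps)$ and using $m = O(n)$ for minor-free graphs). The diameter and connectivity guarantees of each part, and the maintenance of a root and spanning tree within it, are inherited unchanged from Claim~\ref{clm:stage-I-correct2} and Lemma~\ref{lem:span-tree}.

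The main obstacle is twofold. First, we must ensure that the random peeling still produces an orientation in which the forest $\calF_i$ of heaviest outgoing edges carries weight at least $w(\calG_i)/O(\alpha)$, since this is what drives the constant-factor weight reduction per phase; this follows from a pigeonhole over each node's $O(\alpha)$ outgoing edges and is insensitive to how the orientation was produced. Second, and more delicate, we must control the emulation cost so that the $O(\log^* n + \log(1/\delta'))$ count of super-rounds on $\calG_i$ is not inflated by a $\log n$ factor when lifted to $G$: here we rely on the fact that each super-round involves only a single bit of information per edge going up and down $T_i^j$, whose depth is $\poly(1/\eps)$, and that at most a constant number of such exchanges are needed per super-round (tie-breaking by id, as in Section~\ref{subsec:emulate-forest}).
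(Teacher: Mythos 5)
There is a genuine gap: your central claim that ``a Chernoff bound then shows that after $O(\log(1/\delta'))$ super-rounds the active set is empty with probability $1-\delta'$'' does not hold. The peeling process removes at most a constant fraction of the currently active nodes per super-round, so even in the best case (removal of a constant fraction with probability $1$) one needs $\Theta(\log n_i)$ super-rounds to empty an active set of $n_i$ nodes, and $n_i$ can be as large as $n$ in the early phases. Randomizing which low-degree nodes are peeled, and appealing to Chernoff, improves concentration around the expected per-round shrinkage but does nothing to reduce the \emph{number} of rounds needed; you still get $\Theta(\log(n/\delta'))$ and thus the $\log n$ dependence you set out to remove. Your pigeonhole observation about the weight of $\calF_i$ is also tied to actually having the orientation, which a truncated peeling does not provide (the unpeeled nodes have no out-edges, and the peeled nodes were not chosen according to weight, so there is no a priori lower bound on the weight of the selected edges).

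The paper avoids this by dropping the forest-decomposition/peeling step entirely. Under the minor-free promise the arboricity of every $\calG_i$ is a constant, so there is nothing to verify; the only role the orientation played was to let each node pick a heavy out-edge so that the selected edge set $\calF_i$ carries weight $\Omega(w(\calG_i)/\alpha)$. This is replaced by \emph{weighted-edge selection}: each node of $\calG_i$ independently samples one incident edge with probability proportional to its weight, repeats $s = \Theta(\log(1/\delta))$ times, and keeps the heaviest sample. Lemma~\ref{lem:contr} shows, by reasoning about a forest decomposition that exists but is never computed, that this yields total selected weight at least $w(\calG_i)/(16\alpha)$ with probability $1-\delta$, entirely in $O(\log(1/\delta))$ super-rounds plus the $O(\log^* n)$ for Cole--Vishkin, with no peeling phase at all. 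A further point your route would also have had to confront if patched: the randomly selected edges form only a directed pseudo-forest rather than a forest, and Claim~\ref{clm:stilltree} is needed to show that the marking in Sub-step~\ref{substep:mark} still produces trees in that case, so that Sub-step~\ref{substep:stars} remains well-defined.
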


\begin{remark}
If one is willing to settle for constant success probability (i.e., constant $\delta$), then the round complexity of Theorem~\ref{thm:second} can be improved to be only $\poly(1/\eps)$.
\end{remark}
\newcommand{\secondparti}{
In what follows we prove Theorem~\ref{thm:second}.

Recall that in the algorithm for testing planarity described in Section~\ref{sec:plan-test}, the source of the dependence on $\log(n)$ in the round complexity was due to the forest decomposition step in each phase of the partition stage.  
This step served to verify that each $\calG_i$ has constant arboricity, as well as to allow for each node in $\calG_i$ to select its heaviest outgoing edge in the corresponding forest decomposition (when the arboricity is bounded as required).

If however, there is a promise that $G$ is $H$-minor free, for any fixed $H$ of constant size $h$, then the arboricity of every $\calG_i$ is upper bounded by a constant $c(h)$.
Therefore, there is no need to perform this arboricity verification step.
 Furthermore, as we show below, instead of selecting the heaviest outgoing edge in a forest decomposition, it suffices to select a random incident edge, where the probability to select an edge is a function of its weight.

As discussed above, the algorithm referred to in Theorem~\ref{thm:second} is a modified version of Stage~I of the planarity testing algorithm. It too runs in $\log(1/\eps)$ phases, where in Phase $i$ it coarsens the partition $\calP_i$ and obtains the partition $\calP_{i+1}$ (initially, $\calP_1$ is the partition into singleton parts).
The first difference is that the forest decomposition step is not executed. The second difference is
in the choice of an incident edge for each node in $\calG_i$.
We next describe how the choice of an incident edge is performed as well as the merge decision. In Section~\ref{subsec:random-edge} we explain how this choice and decision are emulated
on $G$. Once a decision to merge is made, the emulation of the merge is performed as described in Section~\ref{subsec:emulate-merge}.

Recall that the algorithm in Section~\ref{subsubsec:merge-Gi}, which runs on $\calG_i$, consists of $4$ Sub-steps.
Sub-step~\ref{substep:heaviest} is the only sub-step that is modified:
Instead of selecting the heaviest out-going edge, each node in $\calG_i$ randomly selects an edge as described next.
The resulting graph, i.e., the graph induced on the selected edges, is guaranteed to be a directed pseudo-forest: the edge selected by each node is its only out-edge, and if an edge is selected by both endpoints then it is oriented as the out-edge of the node of lower id.
We note that for Sub-steps~\ref{substep:cole-vishkin} and~\ref{substep:mark} we only rely on the fact that $\calF_i$ is a directed pseudo-forest.
In order to show that the marking process in Sub-step~\ref{substep:mark} results in a graph which is a forest we prove Claim~\ref{clm:stilltree}.
This claim is required for the correctness of Sub-step~\ref{substep:stars} of the algorithm.

\paragraph{Edge Selection.}
Let $\alpha$ be the arboricity of $\calG_i$ (which is a constant since $G$ is minor-free).
Each node in $\calG_i$ draws one of its incident edges with probability that is proportional to its weight.
Namely, for a node $\calu \in \calG_i$ and an edge $(\calu, \calv)$, the probability that $\calu$ draws
$(\calu, \calv)$ is $\frac{w(\calu, \calv)}{w(\calv)}$ where
$w(\calv) = \sum_{(\caly,\calv)\in E(\calG_i)} w(\caly, \calv)$.
This is repeated $s = \Theta(\log(1/\delta))$ times, and then each node selects the edge of maximum weight over the $s$ trials.
We call this {\em weighted-edge selection}.

\medskip\noindent
We prove the following lemma.
\begin{lemma}\label{lem:contr}
With probability at least $1-\delta$, the total weight of the edges selected in $\calG_i$ is at least $\frac{w(\calG_i)}{16\alpha}$.
\end{lemma}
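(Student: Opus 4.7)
The plan is to lower-bound the weight selected in a single trial in expectation, use a reverse Markov argument to turn this into a constant-probability single-trial guarantee, and then amplify over the $s = \Theta(\log(1/\delta))$ independent repetitions.

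For the expectation step, since $\calG_i$ has arboricity $\alpha$, a standard consequence of Nash--Williams yields an orientation of its edges with out-degree at most $\alpha$ at every node; let $W_{\mathrm{out}}(\calu)$ denote the weight of $\calu$'s out-edges, so that $\sum_\calu W_{\mathrm{out}}(\calu) = w(\calG_i)$. The weight $X_\calu$ of $\calu$'s pick in a single trial satisfies
$$E[X_\calu] \;=\; \sum_{e \ni \calu} \frac{w(e)^2}{w(\calu)} \;\geq\; \sum_{e \in E_{\mathrm{out}}(\calu)} \frac{w(e)^2}{w(\calu)} \;\geq\; \frac{W_{\mathrm{out}}(\calu)^2}{\alpha\, w(\calu)}$$
by Cauchy--Schwarz over the at most $\alpha$ out-edges of $\calu$. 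Summing over $\calu$ and applying Cauchy--Schwarz once more in the form $\sum_\calu a_\calu^2/b_\calu \geq (\sum_\calu a_\calu)^2/\sum_\calu b_\calu$, together with $\sum_\calu w(\calu) = 2 w(\calG_i)$, yields $\sum_\calu E[X_\calu] \geq w(\calG_i)/(2\alpha)$. Since every edge is counted at most twice in $\sum_\calu X_\calu$, the total weight $W$ of the edge set selected in a single trial satisfies $W \geq \sum_\calu X_\calu/2$ pointwise, hence $E[W] \geq w(\calG_i)/(4\alpha)$.

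As $W \leq w(\calG_i)$, a reverse Markov step gives $P(W \geq w(\calG_i)/(8\alpha)) \geq (E[W]-w(\calG_i)/(8\alpha))/w(\calG_i) \geq 1/(8\alpha)$, a positive constant since $\alpha$ is constant for minor-free $G$. After $s$ independent repetitions the probability that every trial misses this threshold is at most $(1-1/(8\alpha))^s \leq \delta$ for $s = \Theta(\log(1/\delta))$. It remains to translate a single successful trial into a guarantee on the \emph{final} selected set $S$, in which each node keeps its max-weight edge over all $s$ trials. Since $\tilde X_\calu := w(e_\calu^{\max}) \geq X_\calu^{(j)}$ for every $j$ (the max pick weighs at least as much as any single-trial pick) and each edge is still double-counted at most once, we obtain $w(S) \geq \sum_\calu \tilde X_\calu/2 \geq \sum_\calu X_\calu^{(j)}/2 \geq w(S_j)/2$ for every $j$, where $S_j$ denotes the set picked in trial $j$. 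Hence any trial with $w(S_j) \geq w(\calG_i)/(8\alpha)$ forces $w(S) \geq w(\calG_i)/(16\alpha)$, proving the claim. The main subtlety is exactly this last step: because each node ultimately commits to only one edge we cannot union per-trial picks across trials, which costs the extra factor of two and is responsible for the constant $1/(16\alpha)$ rather than $1/(8\alpha)$ in the statement.
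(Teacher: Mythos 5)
Your proof is correct, and it follows the same overall template as the paper's: lower bound the expected gain of a single trial, convert this to a constant-probability single-trial guarantee via reverse Markov, amplify over $s=\Theta(\log(1/\delta))$ independent trials, and finally pass to the final selection by noting that a sum of per-node maxima dominates any single trial's sum. Where you genuinely differ is in the expectation step, and your route is arguably cleaner. The paper singles out a subset $U$ of nodes with $w_{\rm out}(\calv)\geq w(\calv)/4$, shows $\sum_{\calv\in U} w_{\rm out}(\calv)\geq w(\calG_i)/2$, and argues per node in $U$ that with probability at least $1/4$ an out-edge is drawn, and conditioned on that the heaviest out-edge (of weight $\geq w_{\rm out}(\calv)/\alpha$) is drawn with probability $\geq 1/\alpha$, yielding $\Ex[w(\calv,i)]\geq w_{\rm out}(\calv)/(4\alpha)$. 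You instead compute $\Ex[X_\calu]=\sum_{e\ni\calu} w(e)^2/w(\calu)$ exactly and apply Cauchy--Schwarz twice --- once over the at most $\alpha$ out-edges within each node, and once in Engel (Sedrakyan) form $\sum a_\calu^2/b_\calu\geq(\sum a_\calu)^2/\sum b_\calu$ across nodes --- to obtain $\sum_\calu\Ex[X_\calu]\geq w(\calG_i)/(2\alpha)$ directly, with no case split on $U$. This is a tidy simplification. You are also more careful than the paper's write-up about the potential double-counting that arises when both endpoints of an edge pick it (each edge appears at most twice in a node-indexed sum), which is exactly what produces the factor-of-two loss you identify at the end; the paper's final inequality $\sum_{\calv\in U}\max_i w(\calv,i)\geq\max_i\sum_{\calv\in U} w(\calv,i)$ does not explicitly address the translation from a node-sum to the weight of the selected edge set, and your bookkeeping closes that gap. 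One small nit: you write ``each edge is still double-counted at most once,'' which reads awkwardly; you mean each edge contributes to the node-sum at most twice, which is what your $w(S)\geq\tfrac12\sum_\calu\tilde X_\calu$ uses.
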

\begin{proof}
Consider a forest decomposition of $\calG_{i}$ into $\alpha$ forests.
Orient the edges from children to parents so that the out-degree of each node is at most $\alpha$.
Let $w_{\rm out}(\calv)$ denote the weight of the out-going edges incident to $\calv$.
Observe that $w(\calG_{i}) = \sum_{\calv\in V(\calG_i)} w_{\rm out}(\calv) = \frac{1}{2} \sum_{\calv\in V(\calG_i)} w(\calv)$.
Let $U$ denote the set of nodes, $\calv$, such that $w_{\rm out}(\calv) \geq w(\calv)/4$.
Then,
\ifnum\jformat=1
\begin{eqnarray}
\sum_{\calv \in U} w_{\rm out}(\calv) & = & \sum_{\calv \in V} w_{\rm out}(\calv) - \sum_{\calv \notin U} w_{\rm out}(\calv) \nonumber \\ & \geq & w(\calG_{i}) - \sum_{\calv \notin U} w(\calv)/4 \;\;\geq\;\; w(\calG_{i})/2\ .\label{eq:weight}
\end{eqnarray}
\else
\ifnum\cready=0
\begin{equation}
\sum_{\calv \in U} w_{\rm out}(\calv)  = \sum_{\calv \in V} w_{\rm out}(\calv) - \sum_{\calv \notin U} w_{\rm out}(\calv) \geq w(\calG_{i}) - \sum_{\calv \notin U} w(\calv)/4 \geq w(\calG_{i})/2\ .\label{eq:weight}
\end{equation}
\else
\begin{eqnarray}
\sum_{\calv \in U} w_{\rm out}(\calv)  &=&
    \sum_{\calv \in V} w_{\rm out}(\calv) - \sum_{\calv \notin U} w_{\rm out}(\calv)  \nonumber \\
&\geq& w(\calG_{i}) - \sum_{\calv \notin U} w(\calv)/4
    \;\;\geq\;\; w(\calG_{i})/2\ .\label{eq:weight}
\end{eqnarray}
\fi
\fi 

Let $\calv \in U$ and $i\in [s]$. Define $w(\calv, i)$ to be the weight of the edge that $\calv$ drew in trial $i$.
Then $\Ex[w(\calv, i)] \geq w_{\rm out}(\calv)/(4\alpha)$.
To verify this, observe that with probability at least $1/4$, $\calv$ draws one of its out-edges, and conditioned on that, with probability at least $1/\alpha$, the heaviest out-edge is drawn.
By linearity of expectation and Equation~(\ref{eq:weight}), for each trial $i\in [s]$,  $\Ex\left[\sum_{\calv \in U}w(\calv, i)\right] \geq w(\calG_{i})/(8\alpha)$.
We claim that for every $i\in [s]$, with probability at least $1-1/(16\alpha-1)$,  $\sum_{\calv \in U}w(\calv, i) \geq w(\calG_{i})/(16\alpha)$.
Assume otherwise and obtain a contradiction:
\ifnum\jformat=0
$$\Ex\left[\sum_{\calv \in U}w(\calv, i)\right] < \frac{w(\calG_{i})}{16\alpha-1} + \left(1-\frac{1}{16\alpha-1}\right)\frac{w(\calG_{i})}{16\alpha} =  \frac{w(\calG_{i})}{8\alpha}\ .$$
\else
\begin{eqnarray}
\Ex\left[\sum_{\calv \in U}w(\calv, i)\right] & < & \frac{w(\calG_{i})}{16\alpha-1} + \left(1-\frac{1}{16\alpha-1}\right)\frac{w(\calG_{i})}{16\alpha} \nonumber\\ & = &  \frac{w(\calG_{i})}{8\alpha}\:.
\end{eqnarray}
\fi
Thus, the probability that in all $s$ trials we get that $\sum_{\calv \in U}w(\calv, i) < w(\calG_{i})/(16\alpha)$ is at most $(1/(16\alpha-1))^s$.
Since $s = \Theta(\log(1/\delta))$, we obtain that with probability at least $1-\delta$, there exists a trial $i \in [s]$ such that $\sum_{\calv \in U}w(\calv, i) \geq w(\calG_{i})/(16\alpha)$.
From the fact that $\sum_{\calv \in U}\max_i(w(\calv, i)) \geq \max_i\left(\sum_{\calv \in U}w(\calv, i)\right)$, we obtain the desired result.
\end{proof}

\begin{ourclaim}\label{clm:contract2}
With probability at least $1-\delta$, $w(\calG_{i+1})  \leq \left(1-\frac{1}{64\alpha}\right)\cdot w(\calG_i)$.
\end{ourclaim}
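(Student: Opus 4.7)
The plan is to reduce this claim to the combination of Lemma~\ref{lem:contr} (which quantifies how much weight is captured by the randomly selected edges) and the weight-preservation guarantees of Sub-steps~\ref{substep:subtrees}--\ref{substep:star-contract}, exactly as they were used in the proof of Claim~\ref{clm:merge}. The only genuine novelty is that the selected edge set $\calF_i$ is now a (directed) pseudo-forest rather than a forest, so I first need to invoke Claim~\ref{clm:stilltree} to conclude that after the marking in Sub-step~\ref{substep:mark}, the induced subgraph $\calT_i$ is an honest forest of shallow trees, which is what the analysis of~\cite{CHW08} requires.

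Concretely, I would condition on the event $\calE$ guaranteed by Lemma~\ref{lem:contr}, which holds with probability at least $1-\delta$ and states that $w(\calF_i) \geq w(\calG_i)/(16\alpha)$. Under $\calE$, applying Claim~\ref{clm:stilltree} gives us that $\calT_i$ is a forest, and the same color-based marking analysis that was used in the proof of Claim~\ref{clm:merge} (following~\cite[Section~2]{CHW08}) yields $w(\calT_i) \geq w(\calF_i)/2$ together with the fact that each tree in $\calT_i$ has constant height. Sub-step~\ref{substep:stars} then selects, inside each tree $T$, either the even or the odd edges so that the selected set has weight at least $w(T)/2$, and Sub-step~\ref{substep:star-contract} contracts these selected edges, so the total weight of contracted edges is at least $w(\calT_i)/2 \geq w(\calF_i)/4 \geq w(\calG_i)/(64\alpha)$. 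Since $w(\calG_{i+1})$ equals $w(\calG_i)$ minus the weight of the contracted edges, this establishes $w(\calG_{i+1}) \leq (1 - 1/(64\alpha)) w(\calG_i)$ on the event $\calE$.

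The main obstacle, which is precisely what Claim~\ref{clm:stilltree} must handle, is verifying that the rule defining $\calT_i$ really breaks all pseudo-forest cycles: because $\calF_i$ may contain a $2$-cycle (when two neighboring nodes both select the edge between them) or a longer directed cycle entering the single-cycle component of a pseudo-tree, one must check that the $3$-coloring based marking rule in Sub-step~\ref{substep:mark} always keeps at most one edge of such a cycle. Once this is in hand, every other step in the argument is a direct reproduction of the proof of Claim~\ref{clm:merge}, with the only quantitative change being that the lower bound $w(\calF_i) \geq w(\calG_i)/(3\alpha)$ used there is replaced by the weaker (but probabilistic) lower bound $w(\calF_i) \geq w(\calG_i)/(16\alpha)$ coming from Lemma~\ref{lem:contr}, which accounts for the change of constant from $1/(12\alpha)$ to $1/(64\alpha)$ in the conclusion.
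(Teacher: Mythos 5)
Your proposal is correct and follows essentially the same route as the paper, which proves the claim by citing Lemma~\ref{lem:contr} together with the argument of Claim~\ref{clm:merge}. You usefully make explicit the role of Claim~\ref{clm:stilltree} in repairing the pseudo-forest issue before invoking the~\cite{CHW08} marking analysis, but this is the same chain $w(\calF_i) \geq w(\calG_i)/(16\alpha)$, $w(\calT_i) \geq w(\calF_i)/2$, contracted weight $\geq w(\calT_i)/2$ that the paper intends.
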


\begin{proof}
The proof follows from Lemma~\ref{lem:contr} and the same analysis as in Claim~\ref{clm:merge}.
\end{proof}

\begin{ourclaim}\label{clm:stilltree}
For an input graph that is a directed pseudo-forest, the graph resulting from the marking process in Sub-step~\ref{substep:mark} is a tree.
\end{ourclaim}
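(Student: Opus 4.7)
The plan is to show that the marked subgraph is acyclic; combined with the fact that edges are only removed (not added) from the input pseudo-forest, this implies the resulting graph is a disjoint union of trees. Since the input is a directed pseudo-forest (each vertex has out-degree at most $1$), every connected component contains at most one simple cycle, and any such cycle must be a directed cycle $C = v_1 \to v_2 \to \cdots \to v_k \to v_1$. So it suffices to derive a contradiction from assuming every edge of $C$ is marked (by at least one of its endpoints).

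First I would rule out any cycle vertex of color~$1$. The color-$1$ rule is exclusive: $v_i$ marks either its unique out-edge $(v_i,v_{i+1})$ \emph{or} all its in-edges (including $(v_{i-1},v_i)$), but never both. If $v_i$ marks its out-edge, then for the other cycle edge $(v_{i-1},v_i)$ to be marked, its other endpoint $v_{i-1}$ would have to mark it. But $\chi(v_{i-1}) \in \{2,3\}$: color-$3$ vertices mark nothing, and a color-$2$ vertex marks its out-edge only when the head is color~$3$, whereas $\chi(v_i)=1$. The symmetric sub-case (where $v_i$ marks its in-edges) is handled analogously, using $v_{i+1}$ instead. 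Hence the cycle avoids color~$1$ entirely.

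Since $\chi$ is a proper $3$-coloring, the cycle must then alternate between colors~$2$ and~$3$, so $k$ is even. Pick any color-$2$ vertex $v_i \in C$; its neighbors $v_{i-1},v_{i+1}$ are both color~$3$ and mark no edges at all, so both cycle edges incident to $v_i$ must be marked by $v_i$ itself. But the color-$2$ rule, restricted to edges whose other endpoint is color~$3$, is again exclusive: $v_i$ marks either the out-edge $(v_i,v_{i+1})$ \emph{or} all in-edges from color-$3$ neighbors (including $(v_{i-1},v_i)$), never both. This is the desired contradiction.

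The main obstacle will be to pin down the convention under which an edge counts as ``marked'' (namely, by at least one of its two endpoints) and to cleanly separate the marking rules for colors~$1$ and~$2$ into their exclusive sub-cases so that the case analysis covers every way a cycle edge could be marked. A minor ancillary point is to note that the Cole--Vishkin coloring in Sub-step~\ref{substep:cole-vishkin} still yields a proper $3$-coloring on pseudo-forest inputs (e.g., by rooting each unicyclic component at an arbitrary cycle vertex and coloring the resulting spanning tree), so that the above coloring-based analysis applies.
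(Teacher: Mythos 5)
Your proof is correct and follows essentially the same approach as the paper's: assume a directed cycle survives the marking, rule out any color-$1$ vertex on it, and then derive a contradiction from a color-$2$ vertex whose two cycle neighbors are both color $3$. Your write-up is somewhat more explicit than the paper's terse version, in particular by spelling out why the cycle neighbors of a color-$1$ vertex (being color $2$ or $3$) cannot mark the cycle edge that the color-$1$ vertex itself leaves unmarked, and by noting explicitly that color-$3$ vertices mark nothing — steps the paper leaves implicit.
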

\begin{proof}
We first note that given a directed pseudo-forest, the only cycle that might exist in the graph has to be a directed cycle.
Assume towards contradiction that there exists a directed cycle in the marked graph.
We claim that it must contain a node that is colored by $1$.
To verify this, observe that any vertex that is colored by $2$, cannot have a marked outgoing edge and a marked incoming edge, such that for both edges, the other endpoint is colored by $3$.
But since every node that is colored by $1$ can only have either out-going edges or in-going edges, we reach a contradiction.
\end{proof}

\subsection{Emulation of the weighted-edge selection}\label{subsec:random-edge}
In what follows, when we say that an edge $(u,v) \in E(G)$ is incident to a part $P_i^j$, we
mean that $u \in P_i^j$ and $v \in P_i^{j'}$ for $j'\neq j$. In order to
draw an edge incident to $\ver(P_i^j)$ in $\calG_i$ with probability proportional to its weight
(i.e., emulate the drawing of edges in the weighted-edge selection),
we run a procedure for uniformly selecting an edge in $G$ incident to $P_i^j$.
If the selected edge in $G$ is $(u,v)$ where $v\in P_i^{j'}$, then
the corresponding drawn edge in $\calG_i$ is $(\ver(P_i^j), \ver(P_i^{j'}))$.


This uniform selection is implemented as follows. First, each node sends a message to all its neighbors with the id of the root of its part. Following this round, each node
 $u$ on the boundary of $P_i^j$, knows the set of incident edges  $(u,v)$ such that $v \notin P_i^j$.
 We denote this set by $E_{i,{\rm out}}(u)$  and let $d_{i,{\rm out}}(u) = |E_{i,{\rm out}}(u)|$.
Provided with this information, $u$ selects, uniformly at random, one edge $e \in E_{i,{\rm out}}(u)$
and sends its parent (in the tree $T_i^j$) the message $(e,d_{i,{\rm out}}(u))$.
In each consecutive round, if a node $v$ received the messages $(e_1,d_1),\dots,(e_s,d_s)$ from its
children, then it does the following. It sets $d = \sum_{p=1}^s d_p$, and selects one of the edges
$e_p$ with probability $d_p/d$. It then sends the message $(e_p,d)$ to its parent.
For the sake of consistency of the description, $r_i^j$ sends messages to itself.
At the end of this process (after $\poly(1/\eps)$ rounds), $r_i^j$ has a single edge, denoted
$e_i^j = (u_i^j,v_i^j)$ for $u_i^j \in P_i^j$ and $v_i^j \in P_i^{j'}$,
that is uniformly distributed among the edges incident to $P_i^j$.

\bigskip
Theorem~\ref{thm:second} now follows from Claim~\ref{clm:contract2}, and the fact that the total cost
of the weighted-edge selection is linear in the diameter of the parts $P_i^j$, which is $\poly(1/\eps)$, times the number of repetitions which is $O(\log (1/\delta))$.
}
\ifnum\podc=0
\secondparti
\fi
\subsection{Applications of the partitioning algorithm for minor-free graphs}\label{sec:apps}

As a corollary of Theorems~\ref{thm:determ} and~\ref{thm:second} we get the following.
\begin{corollary}\label{cor:cycle-free-bip}
There is a deterministic algorithm and a randomized algorithm in the \congest\ model
for testing the following properties on minor-free graphs: cycle-freeness and bipartiteness.
The deterministic algorithm runs in $O(\poly(1/\eps)\log n)$ rounds.
The randomized algorithm has one-sided error, runs in $O(\poly(1/\eps)(\log(1/\delta + \log^* n))$ rounds
and has success probability $1-\delta$.
\end{corollary}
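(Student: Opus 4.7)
The plan is to reduce both tests to the partitioning algorithms of Theorems~\ref{thm:determ} and~\ref{thm:second}, followed by a local BFS-based verification inside each part. Given a proximity parameter $\eps$, I would first invoke the appropriate partitioning algorithm with an edge-cut parameter $\eps'$ that is a suitable constant multiple of $\eps$; the conversion from the per-$n$ cut bound $\eps' n$ to the per-$m$ testing distance $\eps m$ is handled using the constant arboricity of minor-free graphs. This yields a partition $\calP=(P^1,\dots,P^k)$ of $V$ into connected parts, each of diameter $\poly(1/\eps)$ with a designated root $r^j$ known to all members of $P^j$, and (with probability at least $1-\delta$ in the randomized case) at most $(\eps/2)m$ edges crossing between parts.

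Second, within each part $P^j$, I would let the root $r^j$ coordinate the construction of a BFS tree of the induced subgraph $G^j$ in $\poly(1/\eps)$ rounds, exactly as in the preprocessing of Stage~II. For cycle-freeness, every node in $G^j$ outputs \emph{reject} if it is incident to a non-tree edge of $G^j$. For bipartiteness, every node broadcasts its BFS level to its neighbors in $G^j$ and rejects if some neighbor in $G^j$ reports the same BFS level.

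For completeness (which yields the one-sided error guarantee), both properties are hereditary, so when $G$ itself has the property each induced subgraph $G^j$ does as well; thus no non-tree edge exists in the cycle-free case, no same-level edge exists in the bipartite case, and every node accepts. For soundness, if $G$ is $\eps$-far from the property then the subgraph $G'=\bigcup_j G^j$ obtained by deleting the (at most) $(\eps/2)m$ cross-edges is still $(\eps/2)$-far from the property, so at least one $G^j$ itself violates it: in the cycle-free case $G^j$ then contains a cycle, which must give rise to a non-tree edge, and in the bipartite case $G^j$ contains an odd cycle, and a standard BFS-parity argument forces at least one edge whose endpoints lie at the same BFS level, which is detected by the local test.

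The round complexity is dominated by the partitioning stage, giving $O(\poly(1/\eps)\log n)$ for the deterministic algorithm (via Theorem~\ref{thm:determ}) and $O(\poly(1/\eps)(\log(1/\delta)+\log^* n))$ for the randomized algorithm (via Theorem~\ref{thm:second}), plus an additive $\poly(1/\eps)$ term for the BFS-based local verification. The main subtlety is the conversion between the edge-cut guarantee (measured in terms of $n$) and the testing distance (an $\eps$-fraction of $m$), which is absorbed into the constant arboricity of $G$ and so poses no genuine obstacle.
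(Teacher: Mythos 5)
Your proposal is correct and follows essentially the same approach as the paper's own proof: run the partitioning algorithm of Theorem~\ref{thm:determ} or~\ref{thm:second} with a slightly reduced cut parameter, build a BFS tree inside each part, and have each node locally reject upon seeing a non-tree intra-part edge (cycle-freeness) or a same-level intra-part edge (bipartiteness). You are somewhat more explicit than the paper about the $\eps'n$ versus $\eps m$ conversion and about why far-from-property $G$ forces some $G^j$ to violate the property, but the algorithm and its analysis are the same.
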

We note that similar statements can be derived for any hereditary property that can either be verified
or (property) tested in a number of rounds that is polynomial in the diameter.

\newcommand{\coroapp}{
\medskip
\begin{proof}
For both properties, first the algorithm of Theorem~\ref{thm:determ} or Theorem~\ref{thm:second} is run with the edge-cut parameter set to slightly below $\eps$ (the distance parameter for property testing). Let $\calP = (P^1,\dots,P^k)$ be the resulting partition, and let $G^j$ denote
the subgraph induced by $P^j$.
By 
both theorems, if $G$ is $\eps$-far from having the property in question, then (with probability 1 or with probability
 $1-\delta$) at least one subgraph $G^j$ does not have the property (while if $G$ has the property, then every $G^j$ has the property). Therefore, it suffices to verify the property on each $G^j$. To this end,
an algorithm for finding a BFS tree is executed on each $G^j$. If the case of cycle freeness, each node now checks whether it has any incident non-tree edges in $G^j$, and in the case of bipartiteness it checks whether there is any such edge that closes an odd-length cycle.
\end{proof}
}
\ifnum\podc=1
The proof of Corollary~\ref{cor:cycle-free-bip} appears in Appendix~\ref{a:parti2}.
The following corollary is proven similarly.
\else
\coroapp

The proof of the following corollary is similar.
\fi
\begin{corollary}\label{cor:spanner}
There is a deterministic algorithm and a randomized algorithm in the \congest\ model that, given
$\eps \in (0,1)$, construct an $O(\poly(1/\eps))$-spanner of any unweighted, minor-free graph.
The spanner has $(1+O(\eps))n$ edges with probability $1$ in the deterministic algorithm, and with probability $1-\delta$ in the randomized algorithm.
The round complexity of the deterministic algorithm is $O(\poly(1/\eps)\log n)$ and of the randomized algorithm is $O(\poly(1/\eps)(\log(1/\delta)+\log^* n))$.
\end{corollary}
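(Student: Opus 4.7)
The plan is to use the partitioning algorithm of Theorem~\ref{thm:determ} (for the deterministic version) or Theorem~\ref{thm:second} (for the randomized version) as a black box, and then add one spanning tree per part together with all inter-part edges.

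Concretely, first I would invoke the partitioning algorithm with edge-cut parameter $\eps' = \Theta(\eps)$ (adjusted so that the resulting edge-cut bound is at most $\eps n$, using that any minor-free graph satisfies $m = O(n)$, so the $\eps' m$ cut bound from Stage~I translates to $\eps n$ after rescaling). This produces a partition $\calP = (P^1,\dots,P^k)$ in which each induced subgraph $G^j$ is connected with diameter $\poly(1/\eps)$, and (deterministically, or with probability at least $1-\delta$) at most $\eps n$ edges cross between parts. Next, in parallel for all $j \in [k]$, I would run a BFS inside each $G^j$ rooted at the designated $r^j$, which costs $\poly(1/\eps)$ rounds by Corollary~\ref{cor:stage-I-correct2}. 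The spanner $H$ is then defined as the union of (i) all the BFS trees $\{T_B^j\}_{j=1}^k$ and (ii) all inter-part edges of $G$ (each of which is known to its endpoints after a single round of exchanging root ids).

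For the size bound, the BFS trees contribute exactly $n - k$ edges in total, and the inter-part edges contribute at most $\eps n$ edges by the guarantee of the partitioning algorithm, so $|E(H)| \leq n + \eps n = (1+O(\eps))n$, which gives the claimed bound with probability $1$ in the deterministic version and with probability $1-\delta$ in the randomized version. For the stretch, consider an arbitrary edge $(u,v) \in E(G)$. If $(u,v)$ is inter-part, it lies in $H$ and the stretch on this edge is $1$; otherwise $u,v \in P^j$ for some $j$, and since $T_B^j \subseteq H$ is a spanning tree of $G^j$ of depth $\poly(1/\eps)$, there is a $u$-$v$ path in $H$ of length at most $2\cdot\mathrm{diam}(G^j) = \poly(1/\eps)$. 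Thus $H$ is a $\poly(1/\eps)$-spanner.

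The round complexity is dominated by the partitioning phase: $O(\poly(1/\eps)\log n)$ in the deterministic case (Theorem~\ref{thm:determ}) and $O(\poly(1/\eps)(\log(1/\delta)+\log^* n))$ in the randomized case (Theorem~\ref{thm:second}); the BFS computations and the single round of inter-part edge marking add only another $\poly(1/\eps)$ rounds. The main point requiring care is the calibration of $\eps'$ in the partitioning so that both the relative edge-cut bound and the bound $m = O(n)$ (which follows from minor-freeness) combine to yield the absolute $\eps n$ bound on inter-part edges; the rest is routine.
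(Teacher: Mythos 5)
Your proposal is correct and follows essentially the same route as the paper, which only gestures at this proof by saying it is ``similar'' to that of Corollary~\ref{cor:cycle-free-bip}: run the partitioning algorithm of Theorem~\ref{thm:determ} or~\ref{thm:second}, take a BFS tree inside each part, and add all inter-part edges; the size and stretch bounds then follow exactly as you argue. Your remark about calibrating the edge-cut parameter so that the $\eps m$-type cut bound from Stage~I becomes an $\eps n$ bound via $m = O(n)$ for minor-free graphs is a valid observation, but note that Theorems~\ref{thm:determ} and~\ref{thm:second} already state their cut guarantee directly as $\eps n$, so this rescaling is already absorbed into those statements.
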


\ifnum\jformat=0
\subsection*{Acknowledgements}
\else
\begin{acknowledgements}
\fi
We would like to thank Mohsen Ghaffari and Merav Parter for helpful information.
We would also like to thank the anonymous PODC reviewers for their helpful comments.
\ifnum\jformat=1
\end{acknowledgements}
\fi

\ifnum\cready=0
\bibliographystyle{plain}
\else
\bibliographystyle{ACM-Reference-Format}
\fi
\bibliography{refs}

\ifnum\podc=1
\appendix

\section{Related work in (centralized) property testing}\label{a:related}
\ptestingrel

\section{Missing details for Section~\ref{sec:plan-test}}\label{a:partition}
\EmulateMerge

\subsection{Missing proof of Claim~\ref{clm:merge}}
\mergestepclaim

\subsection{Missing proof of Claim~\ref{clm:stage-I-correct1}}
\stagecomplete

\subsection{Missing proof of Claim~\ref{clm:stage-I-correct2}}
\connecdiam

\section{Missing details for Section~\ref{sec:minor-free}}\label{a:parti2}
\secondparti

\subsection{Missing proof of Corollary~\ref{cor:cycle-free-bip}}
\coroapp

\lbproof
\fi

\ifnum\full=1
\appendix
\section{Proof of Lemma~\ref{cor:cycle}}


In order to 
prove Lemma~\ref{cor:cycle}, we introduce several notions.
These notions are very similar to those defined in~\cite[Chap.~7]{Even}, except that that there it was assumed that the graphs in question have vertex connectivity at least two. Here we do not make this assumption.

Let $H$ be a connected graph and let $C$ be a simple cycle in $H$. Consider the connected components of the graph resulting from removing the nodes in $C$. For each such connected component $D$, let $A(D)$ denote the subset of nodes on $C$ that neighbor nodes in $D$.
We refer to $A(D)$ as the {\em attachment\/} nodes of $D$ on $C$. Let $B(D)$ denote the subgraph induced by nodes of $D$ and $A(D)$, not including edges of $C$.
If $|A(D)| \geq 2$, then we refer to $B(D)$ as a {\em bridge\/}, and if $|A(D)| = 1$, then it is a {\em half-bridge\/}. We also refer to edges between pairs of nodes on $C$ as bridges. Two bridges $B$ and $B'$ are said to {\em interlace\/} if one of the following holds:
\begin{enumerate}
\item There are two attachments of $B$, $x$ and $y$, and two attachments of $B'$, $w$ and $z$, such that all four are distinct, and appear on $C$ in the order (say, clockwise) $x,w,y,z$.
\item There are three attachments common to $B$ and $B'$. That is, $|A(B) \cap A(B')| \geq 3$.
\end{enumerate}

The next lemma is a slight modification of Lemma~7.2 in~\cite{Even} (where here the graph $H$ in question is not assumed to have vertex connectivity at least two).
\begin{lemma}\label{lem:bridges}
Let $B_1,\dots,B_s$ be the set of bridges and half-bridges of a graph $H$ with respect to a simple cycle $C$.
Suppose that $C+B_i$ is planar for every $1 \leq i \leq s$ and that
no two bridges in the set interlace.
Then $C+B_1+\dots+B_s$ can be embedded in the plane so that all the bridges and half-bridges are inside $C$.
\end{lemma}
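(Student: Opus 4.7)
The plan is to prove the lemma by induction on $s$, the number of bridges and half-bridges. For the base case $s = 1$, the hypothesis that $C + B_1$ is planar gives a planar embedding of $C + B_1$ in which $B_1$ lies in one of the two regions bounded by $C$; reflecting across $C$ if necessary, we may assume $B_1$ lies in the interior of $C$.

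For the inductive step, suppose the lemma holds for any collection of $s-1$ bridges/half-bridges satisfying the hypothesis. Apply the induction hypothesis to $B_1, \dots, B_{s-1}$ to obtain a planar embedding of $G' = C + B_1 + \dots + B_{s-1}$ with every $B_i$ drawn inside $C$. The goal is to show that $B_s$ can be added inside $C$ without introducing any crossing. The key sub-claim to establish is that there exists a single face $F$ of the embedding of $G'$, lying in the interior of $C$, whose boundary $\partial F$ contains every attachment vertex in $A(B_s)$, and moreover such that the intersection of $\partial F$ with $C$ is a union of arcs of $C$ covering all of $A(B_s)$.

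To prove this sub-claim I would argue by tracking how each previously embedded $B_i$ subdivides the interior of $C$. If $B_i$ is a half-bridge, then it has a single attachment and does not separate any two points of $C \setminus A(B_i)$ from one another inside $C$, so it cannot split $A(B_s)$ across different faces. If $B_i$ has $|A(B_i)| \ge 2$, then $A(B_i)$ partitions $C$ into arcs; the non-interlacing hypothesis (first clause) rules out the pattern $x, w, y, z$ of alternating attachments, so $A(B_s) \setminus A(B_i)$ lies entirely on one such arc. Combined with the second clause of non-interlacing (at most two attachments are shared with $B_i$), a careful but routine case analysis shows that $A(B_s)$ remains in the boundary of a single face after $B_i$ is added to the embedding. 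Iterating over $i = 1, \dots, s-1$ yields the desired face $F$.

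Once $F$ is identified, I would embed $B_s$ inside $F$ as follows. The hypothesis that $C + B_s$ is planar gives a planar drawing of $C + B_s$ with $B_s$ on one side of $C$; this drawing places the vertices of $A(B_s)$ in a cyclic order on $C$ matching the order induced by $C$. The arc(s) of $C$ on $\partial F$ carry the same cyclic order on $A(B_s)$, so after a possible reflection we can ``glue in'' the drawing of $B_s$ into the face $F$ without crossings, obtaining the required planar embedding of $C + B_1 + \dots + B_s$ with all bridges inside $C$. Half-bridges $B_s$ are handled as a trivial special case: their unique attachment vertex lies on $C$ and hence on the boundary of some face inside $C$, into which $B_s$ may be drawn using the planarity of $C + B_s$.

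The main obstacle is the case analysis for the sub-claim, specifically showing that when a previously inserted bridge $B_i$ shares one or two attachments with $B_s$, the attachments of $B_s$ still lie on the boundary of a single face. The two clauses of the non-interlacing definition are precisely what is needed to rule out configurations that would separate $A(B_s)$ across multiple faces, and checking this rigorously is the technically most delicate part of the argument.
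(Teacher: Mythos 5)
Your approach differs from the paper's in two respects: you induct on the number of bridges $s$ rather than on the number of nodes, and you remove an \emph{arbitrary} bridge $B_s$ and try to reinsert it afterward. The paper's proof (following Even) instead identifies an \emph{extremal} bridge $B_i$, namely one whose attachments $a_1,\dots,a_k$ (in clockwise order) have no other bridge's attachment strictly between $a_1$ and $a_k$; this innermost bridge can be embedded against its arc and peeled off cleanly, so the inductive step never needs to reason about the face structure of a previously obtained embedding.

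The gap in your version is in the sub-claim that, in the embedding supplied by the induction hypothesis for $C+B_1+\dots+B_{s-1}$, \emph{all} of $A(B_s)$ lies on the boundary of a single interior face. I believe the sub-claim is true, but your ``iterate over $i=1,\dots,s-1$'' argument does not actually establish it: the embedding returned by the induction hypothesis is not guaranteed to have been built by inserting $B_1,\dots,B_{s-1}$ one at a time in that order, and even under a sequential construction, the claim that ``adding $B_i$ does not split $A(B_s)$'' must be tracked with respect to the \emph{current} face into which $B_i$ is inserted, whose boundary mixes arcs of $C$ with edges of earlier bridges; the non-interlacing hypothesis is stated only relative to $C$, so the consequence you want does not transfer automatically to the evolving face structure. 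One way to repair the argument is to prove the sub-claim directly: if two attachments of $B_s$ lay on distinct interior faces of the final embedding, a Jordan-curve separating them would have to run through bridge edges and meet $C$ at two points $a,b$ that alternate with the two attachments of $B_s$, forcing an interlacing with some $B_i$. Alternatively, choose $B_s$ to be the extremal bridge as in the paper; then the induction hypothesis (applied after contracting the arc $a_1,\dots,a_k$) immediately provides a face adjacent to that arc with all of $A(B_s)$ on its boundary, and no iterative face-tracking is needed.
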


\begin{proof}
We prove the claim by induction on the number of nodes in $C+B_1+\dots+B_s$.
The base case is three nodes (there is just a cycle $C$ and no bridges).
For the induction step,
as in the proof of Lemma~7.2 in~\cite{Even}, since no two bridges interlace, there must be at least one bridge, $B_i$ for
which the following holds. If we consider the attachments of $B_i$ in clockwise order, $a_1,\dots,a_k$ (for $k\geq 2$), then
there is no other bridge $B_j$ with attachments (strictly) after $a_1$ and before $a_k$ on $C$.
\end{proof}

\medskip
As a corollary of Lemma~\ref{lem:bridges} we obtain:
\begin{corollary}\label{cor:C+B}
Let $B_1,\dots,B_s$ be the set of bridges and half-bridges of a graph $H$ with respect to a simple cycle $C$.
Suppose that $C+B_i$ is planar for every $1 \leq i \leq s$ and that the set of bridges can be partitioned into two subsets, such that within each subset no two bridges interlace.  Then $H$ is planar.
\end{corollary}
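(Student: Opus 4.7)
The plan is to apply Lemma~\ref{lem:bridges} twice, once for each part of the promised partition of the bridges, embedding one part inside $C$ and the other outside $C$, and then to argue that the two embeddings can be glued together along $C$ without introducing any crossings.

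Concretely, let $\{B_1,\dots,B_s\} = \mathcal{B}_{\text{in}} \cup \mathcal{B}_{\text{out}}$ be the promised partition into two subsets, each of which is interlace-free. First, I would apply Lemma~\ref{lem:bridges} directly to the cycle $C$ together with the bridges in $\mathcal{B}_{\text{in}}$: by hypothesis each $C + B_i$ is planar and within $\mathcal{B}_{\text{in}}$ no two bridges interlace, so the lemma yields a planar embedding of $C + \sum_{B \in \mathcal{B}_{\text{in}}} B$ in which every bridge of $\mathcal{B}_{\text{in}}$ is drawn inside $C$. Next, I would apply Lemma~\ref{lem:bridges} to $C$ together with the bridges in $\mathcal{B}_{\text{out}}$; this gives a planar embedding in which all these bridges sit inside $C$, and then I would reflect that embedding through $C$ (equivalently, apply a sphere inversion that sends the interior of $C$ to its exterior), so that every bridge of $\mathcal{B}_{\text{out}}$ now lies in the exterior region bounded by $C$.

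Finally, I would combine the two embeddings by taking the cycle $C$ as the common boundary and using the inside drawing from the first application together with the outside drawing from the second. No edges cross: edges within $\mathcal{B}_{\text{in}}$ do not cross each other by the first application of Lemma~\ref{lem:bridges}, edges within $\mathcal{B}_{\text{out}}$ do not cross each other by the second, and no edge of $\mathcal{B}_{\text{in}}$ can cross an edge of $\mathcal{B}_{\text{out}}$ because the former lie strictly in the open interior region bounded by $C$ and the latter strictly in the open exterior region, while endpoints on $C$ are shared consistently. This gives a planar embedding of $C + B_1 + \dots + B_s = H$.

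The main step that deserves care is verifying that the reflection/inversion is legitimate even though $H$ is not assumed to be $2$-connected: half-bridges (those $B_i$ with only one attachment node on $C$) are attached to $C$ at a single cut vertex, so their drawing depends only on the rotation around that vertex, and reflecting the whole interior embedding simply reverses that rotation, which is planarity-preserving. Once this is in place, the gluing argument is routine, and the corollary follows.
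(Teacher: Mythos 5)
Your proposal is correct and matches the argument the paper intends (the paper leaves the proof implicit, presenting the corollary as following immediately from Lemma~\ref{lem:bridges}): apply the lemma once to each interlace-free subset, place one subset's bridges inside $C$ and the other's outside via a reflection/sphere inversion, and glue the two embeddings along the common cycle $C$. Your attention to the half-bridge case and the rotation at cut vertices is a reasonable elaboration of the gluing step, but does not change the underlying route.
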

Building on Corollary~\ref{cor:C+B} we are now ready to prove Lemma~\ref{cor:cycle}.

\BPFOF{Lemma~\ref{cor:cycle}}
Consider the bridges of $H$ with respect to $C(u, v)$. They can be partitioned into two
pairwise noninterlacing subsets: one corresponding the bridges inside $C(u, v)$ and one corresponding to bridges outside $C(u, v)$ (recall that if the bridges interlace then we cannot embed them in one side of $C(u, v)$). Therefore $H$ is planar. It remains to show that there exists a planar embedding of $H$ which is consistent with $\tilde{\tau}$.
For each vertex $x$ in $C(u,v)$, we have the circular order of its edges in the planar embedding of $S_H^I(u,v)$ and $S_H^O(u,v)$, respectively.
We can 
merge this pair of orders into a single order that is consistent with $\tilde{\tau}$ simply by concatenating them (the only edges that these orders have in common are the edges on $C(u, v)$, which are the first and last edges in both orders).
Other vertices of $H$ are either in $S_H^I(u,v)$ or $S_H^O(u,v)$ (but not in both), so the ordering remains consistent.
\EPFOF

\fi

\end{document}